\renewcommand\caption@documentclass{standard} % warning suppression
\newcommand\thespace{\fontdimen2\font plus \fontdimen3\font minus \fontdimen4\font}
\newcommand\thmsep{0.25\baselineskip plus 0.1\baselineskip minus 0.1\baselineskip}
\let\@upn=\relax % thmtools internally uses this to make numbers upright (which is typographically correct?)
\declaretheoremstyle[
    spaceabove    = \thmsep,
    spacebelow    = \thmsep,
    headfont      = \itshape,
    notefont      = \itshape,
    notebraces    = {\textnormal(}{\textnormal)},
    headpunct     = {\textnormal:},
    postheadspace = \thespace, % the default setting doesn't allow stretching/shrinking
]{theorem_style}
\declaretheorem[style=theorem_style]{theorem}
\declaretheorem[style=theorem_style]{lemma}
\declaretheorem[style=theorem_style, name=Assumption]{assume}
\newlist{assumelist}{enumerate}{1}
\setlist[assumelist]{label={\Alph{*})}, ref=\arabic{assume}\Alph{*}, beginpenalty=10000}
\pretocmd\assumelist{\leavevmode}{}{} % leave vertical mode before making the list (consider using LaTeX's hook management system?)
\declaretheoremstyle[
    spaceabove    = \thmsep,
    spacebelow    = \thmsep,
    headfont      = \itshape,
    notefont      = \itshape,
    numbered      = no,
    qed           = $\blacksquare$,
    postheadspace = \thespace
]{proof_style}
\declaretheorem[style=proof_style]{proof}
\DeclareMathOperator*\argmin{argmin}
\DeclareMathOperator\Int{int}
\NewDocumentCommand\inv{s}{
    \scriptscriptstyle
    \IfBooleanTF{#1}%
        {\raisebox{0.3ex}{$\scriptscriptstyle-\mspace{-2mu}1$}}%
        {-\mspace{-2mu}1}%
    \mspace{-2mu}\egroup
}
\pretocmd\inv{\bgroup}{}{}
\NewDocumentCommand\?{}{\!\!}
\RenewDocumentCommand\_{O{3} m}{_{\mspace{-#1mu}#2}}
\NewDocumentCommand{\wh}{m}{\widehat{#1}\vphantom{#1}}
\NewDocumentCommand \T { s O{3} } {
    \IfBooleanTF { #1 } {
        \mspace { -#2mu } \top \! \c_group_end_token
    } {
        \mspace { -#2mu } \top \c_group_end_token \! \peek_after:Nw \__transpose_aux:
    }
}
\pretocmd \T { \c_group_begin_token } { } { } % abusing the technique used in \inv
\savebox\@tempboxa{$ $} % using math mode is needed for initializing \textfont etc.
\savebox\@tempboxa{\the\textfont1\char"18\char\skewchar\textfont1}
\newlength\xiskew
\savebox\@tempboxa{\the\textfont1\char"18\kern0pt\char\skewchar\textfont1} % \kern0pt is used in case we're going to use LuaTeX (empty group only works in pdfTeX)
\NewCommandCopy\oldtilde{\tilde}
\RenewDocumentCommand\tilde{m}{%
    \csname tl_if_eq:nnTF\endcsname{\xi}{#1}{%
        \rlap{\kern\xiskew\raisebox{-0.1ex}{$\tilde{\phantom{\xi}}$}}\xi
    }{\oldtilde{#1}}%
}
\NewDocumentCommand\termset{}{\mathcal{T}}
\NewDocumentCommand\phdot{}{\nonscript\+\cdot\nonscript\+}
\NewDocumentCommand\+{}{\mspace{2mu}}
\colorlet{addcolor}{green!45!black} % color for light mode
\NewDocumentCommand\notehl{O{NOTE} m}{\hl{\textbf{#1} \textsl{\ignorespaces#2}}}
\NewDocumentCommand\p{}{\mathscr{p}}
\NewDocumentCommand\noDisplaySkip{s s}{%
    \IfBooleanTF{#1}{\vadjust}{\vadjust pre}% I don't know if IfBooleanTF is expandable or not (I could've put the \vadjust in front)
    {\vspace{\abovedisplayshortskip-\belowdisplayskip}}%
    \IfBooleanT{#2}{\vadjust pre{\vspace{\abovedisplayshortskip-\belowdisplayskip}}}%
}
\NewDocumentCommand\defas{}{\equiv}
\let\oldepsilon=\epsilon
\RenewDocumentCommand\epsilon{}{\varepsilon}
\RenewDocumentCommand\limsup{}{\varlimsup}
\NewDocumentCommand\subtext{}{\mathrm}
\newsavebox\brace@measure@box
\newlength\brace@measured@height
\RenewDocumentCommand\brace{s t\big t\Big t\bigg t\Bigg m m m}{%
    % star means to not adjust to size
    % #2-5 manually control the size
    % #6-7 are the delimiters
    % #8 is the thing contained within the braces
    \setlength\brace@measured@height{0pt}% for \set
    \let\brace@size=\relax % also for \set
    \IfBooleanTF{#1}{\let\brace@size=\@empty #6#8#7}{%
    \IfBooleanTF{#2}{\let\brace@size=\big \bigl#6#8\bigr#7}{%
    \IfBooleanTF{#3}{\let\brace@size=\Big \Bigl#6#8\Bigr#7}{%
    \IfBooleanTF{#4}{\let\brace@size=\bigg \biggl#6#8\biggr#7}{%
    \IfBooleanTF{#5}{\let\brace@size=\Biggl \Biggl#6#8\Biggr#7}{%
    \brace@measure{#8}%
    \expandafter\brace@typeset\expandafter{\the\brace@measured@height}{#6}{#7}{#8}% this way \brace@measured@height is pre-expanded and we don't need to care about grouping
    }}}}}%
}
\newcommand\brace@measure[1]{%
    \let\brace@saved@cr=\\%
    \let\\=\relax
    \def\brace@current@math@style{\textstyle} % workaround
    \savebox\brace@measure@box{$\brace@current@math@style #1$}%
    \let\\=\brace@saved@cr
    \setlength\brace@measured@height{%
        \ifdim\ht\brace@measure@box>\dp\brace@measure@box
            \ht\brace@measure@box
        \else
            \dp\brace@measure@box
        \fi
    }%
}
\newcommand\brace@typeset[4]{%
    \mathopen{\left#2\vbox to #1{}\right.\kern-\nulldelimiterspace}%
    #4%
    \mathclose{\kern-\nulldelimiterspace\left.\vbox to #1{}\right#3}%
}
\let\expandafter\brace@code\csname brace code\endcsname
\RenewDocumentCommand\qty{s t\big t\Big t\bigg t\Bigg g o d()}{%
    % sames as \brace for #1-5
    % use \qty{...} for curly braces, \qty[...] for square brackets, and \qty(...) for parentheses
    \IfNoValueTF{#6}{%
        \IfNoValueTF{#7}{%
            \IfNoValueTF{#8}{\errmessage{misplaced \string\qty}}{\brace@code{#1}{#2}{#3}{#4}{#5}(){#8}}%
        }{\brace@code{#1}{#2}{#3}{#4}{#5}[]{#7}\IfNoValueF{#8}{(#8)}}%
    }{\brace@code{#1}{#2}{#3}{#4}{#5}\lbrace\rbrace{#6}\IfNoValueF{#7}{[#7]}\IfNoValueF{#8}{(#8)}}%
}%
\NewDocumentCommand\norm{s t\big t\Big t\bigg t\Bigg m}{{}\brace@code{#1}{#2}{#3}{#4}{#5}\lVert\rVert{#6}{}}
\NewDocumentCommand\abs{s t\big t\Big t\bigg t\Bigg m}{{}\brace@code{#1}{#2}{#3}{#4}{#5}\lvert\rvert{#6}{}}
\NewDocumentCommand\mat{g o}{% this one replaces \mqty
    \IfNoValueTF{#1}{%
        \brace[]{\begin{matrix}#2\end{matrix}}%
    }{%
        \begin{matrix}#1\end{matrix}
    }%
}
\NewDocumentCommand\conjtext{m}{\quad\text{#1}\quad} % replaces \qqtext
\colorlet{textcolor}{black} % color doesn't seem to have an effect on tikzpictures, so this is defined so that I can use \color{textcolor}
\NewDocumentCommand\contrast{}{\newcommand\@contrast}
\NewDocumentCommand\darkmode{}{%
    \pagecolor[rgb]{0,0,0}%
    \definecolor{textcolor}{rgb}{\@contrast,\@contrast,\@contrast}%
    \color{textcolor}%
    \LuaULSetHighLightColor{yellow!20!black}%
    \definecolor{addcolor}{rgb:Hsb}{120,0.42,0.83}% stolen from ninecolors.sty
    \let\default@color=\current@color
}
\newcounter{saved_vert_mathcode}
\NewDocumentCommand\set{s t\big t\Big t\bigg t\Bigg m}{% doesn't support nesting
    \expandafter\brace@code\expandafter{\expanded{\csname bool_not_p:n\endcsname{#1}}}{#2}{#3}{#4}{#5}\lbrace\rbrace{%
        \mathcode`|="8000
        \let|=\set@vert
        \mspace{3mu minus 2mu}#6\mspace{3mu minus 2mu}%
    }%
}
\newcommand\set@vert{%
    \mathcode`|=\value{saved_vert_mathcode}%
    \ifx\brace@size\relax % automatic size
        \mathrel{\left|\vbox to \brace@measured@height{}\right.\kern-\nulldelimiterspace}%
    \else % manual size
        \mathrel{\brace@size|}%
    \fi
}
\NewDocumentCommand\plus{}{{\scriptscriptstyle+}\@ifnextchar,{\!}{\@ifnextchar-{\!}{}}}
\NewDocumentCommand\dash{}{\unskip\penalty\exhyphenpenalty\textemdash}
\DeclareMathOperator\closure{cl}
\RenewDocumentCommand\emptyset{}{\varnothing}
\let\oldcomplement=\complement
\RenewDocumentCommand\complement{}{{\mspace{1mu}\scalebox{1}[0.85]{$\scriptstyle\oldcomplement$}}}
\setlist[itemize,1]{label=$\vcenter{\hbox{\scriptsize\textbullet}\vspace{-1.5pt}}$} % don't be like me and forget that \vcenter uses the full \hsize by default
\setlist[itemize,2]{label=$\vcenter{\hbox{\textbf{--}}\vspace{-1.5pt}}$} % \vspace{-1.5pt} makes the box around \textbf{--} more tight
\RenewDocumentCommand\@IEEEsectpunct{}{\textnormal:\ \,} % according to the Chicago Manual of Style, as directed by the IEEE Editorial Style Manual for Authors
\RenewDocumentCommand\subsection{}{\@startsection{subsection}{2}{0pt}{1.5ex plus 2ex minus 0.3ex}{0.5ex plus 0.5ex minus 0.2ex}{\normalfont\normalsize\itshape}}
\RenewDocumentCommand\section{}{\@startsection{section}{1}{0pt}{2.5ex plus 2ex minus 0.5ex}{0.7ex plus 0.5ex minus 0.2ex}{\normalfont\normalsize\centering\scshape}}
\definecolor{goodnotesblue}{HTML}{007AFF}
\RenewDocumentCommand\implies{}{%
    \mathchoice
        {\;\Longrightarrow\;}%
        {\,\Rightarrow\,}%
        {\Rightarrow}%
        {\Rightarrow}%
}
\let\oldforall=\forall
\let\oldexists=\exists
\RenewDocumentCommand\forall{}{\oldforall\+}
\RenewDocumentCommand\exists{}{\oldexists\+}
\RenewDocumentCommand\colon{}{%
    \mspace{1mu}\mathpunct{}\nonscript\mspace{-\thinmuskip}%
    \mathord{:}%
    \mspace{6mu plus 1mu}%
}
\begin{document}

\contrast{1}
% \darkmode

\title{Integrating Planning and Predictive Control\\Using the Path Feasibility Governor}
\author{%
    Shu~Zhang,
    James~Y.~Z.~Liu,
    Dominic~Liao-\nobreak McPherson%
    % ------
    \thanks{This work was supported by the Natural Sciences and Engineering Research Council of Canada (Reference \#\,RGPIN-2023-03257) and the National Research Council Collaborative Science, Technology, and Innovation Program (Agreement \#\,CSDP-016-1)}%
    % \thanks{James Y. Z. Liu (e-mail: jyzliu.bc@gmail.com) is with University Hill Secondary, Vancouver, BC, Canada.}%
    \thanks{Shu Zhang (e-mail: shu.zhang@ubc.ca), James Y. Z. Liu and Dominic Liao-McPherson (e-mail: dliaomcp@mech.ubc.ca) are with the University of British Columbia, Vancouver, BC, Canada.}%
}
\maketitle

% \textbf{TODO.} Check the original proof of Lemma~\ref{prop:ISS} (assume true for now)

\begin{abstract}
    The motion planning problem of generating dynamically feasible, collision-free trajectories in non-convex environments is a fundamental challenge for autonomous systems. Decomposing the problem into path planning and path tracking improves tractability, but integrating these components in a theoretically sound and computationally efficient manner is challenging. We propose the Path Feasibility Governor (PathFG), a framework for integrating path planners with nonlinear Model Predictive Control (MPC). The PathFG manipulates the reference passed to the MPC controller, guiding it along a path while ensuring constraint satisfaction, stability, and recursive feasibility. The PathFG is modular, compatible with replanning, and improves computational efficiency and reliability by reducing the need for long prediction horizons. We prove safety and asymptotic stability with a significantly expanded region of attraction, and validate its real-time performance through a simulated case study of quadrotor navigation in a cluttered environment.
\end{abstract}

\section{Introduction}

\IEEEPARstart{M}{otion} planning involves generating collision-free and dynamically feasible trajectories that steer a system from an initial state to a target state, often under complex, non-convex constraints. This problem is of fundamental importance across nearly all areas of engineering, e.g., aerial infrastructure inspection \cite{ref:Shi2021}, robot-assisted surgery \cite{ref:Baek2018}, automated logistics systems \cite{ref:Wang2015}, and underwater exploration \cite{ref:An2022}.

Motion planning is intrinsically challenging due to the high-dimensional and non-convex nature of the search space \cite{ref:Hwang1992}. Direct trajectory optimization can solve motion planning problems, but often suffers from prohibitive computational complexity and is prone to falling into local minima \cite{ref:Sandakalum2022}. To improve computational tractability and robustness to disturbances, the motion planning problem is commonly decomposed into two hierarchical subproblems: path planning and path tracking. Path planners (e.g., RRT, A\!*, and potential fields) generate collision-free geometric paths \cite{ref:Costa2019,ref:Karur2021}, while path tracking controllers (e.g., PID, LQR, and MPC) are feedback controllers that drive the dynamic system along the path \cite{ref:Brogan1991,ref:Goodwin2006}. This decomposition reduces computational complexity but introduces a kinodynamic disconnect: path planners often ignore or approximate system dynamics, producing kinematically feasible but dynamically infeasible paths. Integrating planning and control in a way that is both computationally tractable and guarantees that the closed-loop system can track the path and satisfy constraints remains a core challenge in motion planning.

\begin{figure}
    \centering
    \includegraphics[width=\columnwidth, trim={0.6cm 0.2cm 1.1cm 0}, clip]{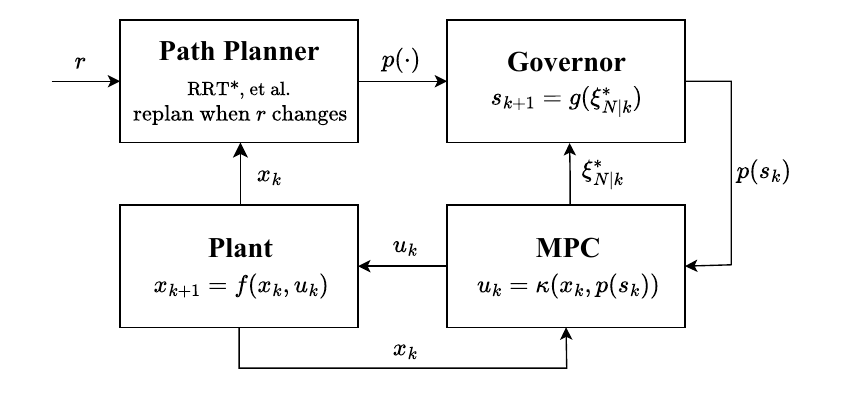}
    \caption{Block diagram of the control architecture. Given a desired target reference $r$, the path planner generates a path connecting the initial position to the target. The Path Feasibility Governor selects from this path an auxiliary reference $s$ that is guaranteed to be feasible for MPC to track, thereby ensuring MPC can compute a valid control input $u$.}
    \label{fig:ctrlArchitecture}
\end{figure}

A broad class of heuristic and ad hoc methods has been developed to bridge path planning and path tracking. Approaches like those in \cite{ref:Kalakrishnan2011,ref:Williams2017,ref:Zhou2019,ref:Ryll2019,ref:Tordesillas2020,ref:Wallace2024} generate dynamically feasible trajectories by sampling and optimizing multiple candidate paths, while the Safe Flight Corridor (SFC) method \cite{ref:Liu2017,ref:Gao2018,ref:Wu2021} constructs convex polyhedral safety margins around reference paths to facilitate trajectory optimization. Other techniques \cite{ref:Liu2024,ref:Zhang2022,ref:Al-Moadhen2022} first interpolate paths at empirically determined resolutions before feeding waypoints to tracking controllers. This class of methods performs well in practice but can fail unexpectedly due to the lack of theoretical guarantees.

Another major category of approaches is controller-agnostic methods, which modify path planning algorithms to ensure trackability but are agnostic to the exact controller. Hamilton-Jacobi (HJ) reachability-based frameworks, such as FaSTrack \cite{ref:Herbert2017,ref:Chen2021,ref:Gong2024}, use HJ reachability analysis to compute a Tracking Error Bound (TEB) that quantifies the maximum deviation between the planning and dynamic models, and incorporate this bound into the planning algorithm to ensure the path is dynamically feasible. This approach is theoretically sound but can be prohibitively computationally expensive due to the curse-of-dimensionality challenges inherent in HJ methods. 

In contrast, controller-integrated methodologies consider specific controllers. Kinodynamic RRTs \cite{ref:LaValle1999,ref:LaValle2001} directly integrate system dynamics into RRT algorithms but are often computationally expensive. LQR-trees \cite{ref:Tedrake2010} integrate planners using regions of attraction of LQR feedback policies. Invariant-set planners \cite{ref:Weiss2015,ref:Danielson2016,ref:Weiss2017,ref:Berntorp2017,ref:Danielson2020} formulate path planning as a graph search, augmenting nodes with controllers (e.g., LQR) and positive-invariant subsets. Edges between nodes represent safe transitions across overlapping invariant sets, guaranteeing safety and dynamical feasibility. The recent work \cite{ref:Csomay-Shanklin2024} similarly adopts a graph-based framework, using dynamically feasible Bézier curves to connect nodes. Another framework specifically designed for continuous systems is the explicit reference governor \cite{ref:Nicotra2018,ref:Convens2021,ref:Li2023}, which leverages the dynamic safety margin to generate auxiliary references that guarantee controller feasibility. This family of methodologies properly integrates path planning and tracking; however, incorporating the controller into the planning process sacrifices modularity, incurring significant computational overhead during frequent replanning in environments with moving obstacles. Moreover, it is difficult to integrate more complex/performant controllers such as MPC \cite{ref:Diehl2018,ref:Goodwin2006,ref:Mayne2014} into these methods.

In this work, we propose the Path Feasibility Governor (PathFG), a modular system that enables the integration of predictive controllers with a wide array of path planners. The proposed architecture is illustrated in Figure~\ref{fig:ctrlArchitecture}. Given a reference $r$, the path planner (e.g., RRT* or potential fields) generates a collision-free path. The PathFG then creates a sequence of intermediate references that guides the system as quickly as possible along the planned path, while ensuring the intermediate references are feasible for the predictive controller. This architecture offers several key advantages. The PathFG is computationally cheap and ensures constraint satisfaction, asymptotic stability, and recursive feasibility of the MPC. It also substantially enlarges the region of attraction of the closed-loop system \dash from states that can be driven to the reference within the prediction horizon, to any state that can be connected to the reference by the planner. Furthermore, it enables decoupling of the non-convex obstacle avoidance problem from the simpler tracking problem, reducing the need for long prediction horizons and making the MPC controller cheaper and more reliable. The PathFG's modular design allows easy integration with existing planners and MPC toolkits and, by maintaining a clear planning-tracking separation, readily supports replanning in environments with dynamic obstacles \dash an advantage compared to approaches that require coupled replanning and control updates \cite{ref:LaValle1999,ref:LaValle2001,ref:Tedrake2010}.

The PathFG is an extension of the feasibility governors \cite{ref:Skibik2021,ref:Skibik2022,ref:Skibik2023} from convex to non-convex search spaces. As non-convex motion planning problems are commonly found across robotics, aerospace, and many other sectors, this extension significantly broadens the set of applications that can be tackled with this method. Another closely related line of research \cite{ref:krupa2024} incorporates an intermediate reference into the MPC optimization problem itself. This approach is not modular and introduces significant non-convexities into the MPC problem, often making the controller more expensive to implement and less reliable.

The remainder of this paper is organized as follows. The problem formulation is presented in Section~\ref{sec:problemFormulation}, followed by the controller design in Section~\ref{sec:controllerDesign}. Section~\ref{sec:theory} formally establishes the PathFG's safety and asymptotic stability properties, while Section~\ref{sec:implementation} details its implementation for a class of systems. Finally, Section~\ref{sec:numericalEx} demonstrates performance of the PathFG framework through simulations of quadrotor navigation in an obstacle-cluttered environment.

\vspace{\thmsep}
\noindent\emph{Notation}: $\mathbb R^{N \times M}\!$ denotes the set of real $N \times M$ matrices, and $\mathbb R^N\!$ denotes the set of real $N\!$-dimensional vectors. The $N \times N$ identity matrix is denoted by $I_N$, and the $N \times M$ zero matrix is denoted by $0_{N \times M}$. The subscripts for the identity and zero matrices are omitted whenever the dimensions are clear from the context. The relation ``$\defas$'' is used for definitions, e.g., ``$a \defas b\+$'' means that $a$ is defined as $b$. Positive definiteness and positive semi-definiteness of a matrix $A$ are denoted by $A \succ 0$ and $A \succeq 0$, respectively. $\mathbb N_{[a,b]} \defas \set{n \in \mathbb N | a \leq n \leq b}$ denotes the set of natural numbers between $a$ and $b$. In this article, we use the definition of $\mathbb N$ that includes 0. The closed ball of radius $\epsilon \in \mathbb R$ around the point $p \in \mathbb R^{n_p}$ is denoted by $\mathcal B_\epsilon(p) \defas \set{p' \mid \norm{p' - p} \leq \epsilon}$. $\norm x$ denotes the Euclidean norm of $x$. The transpose of a vector $a$ is $a^\T$. The concatenation of two vectors $a$ and $b$ is denoted as $(a, b) \defas [a^\T\;b^\T]^\T$. The sets $\mathbb R_{>0}$, $\mathbb R_{\geq0}$ and $\mathbb N_{>0}$ denote the set of positive real numbers, non-negative real numbers and positive integers, respectively. The complement of a set $\mathcal S$ is denoted as $\mathcal S^\complement$, and the closure is $\closure{\mathcal S}$.

\section{Problem Formulation} \label{sec:problemFormulation}

Consider the nonlinear time-invariant system
\begin{equation} \label{eq:nonlinearsysEqs}
    x_{k+1} = f(x_k, u_k)
\end{equation}
subject to the pointwise-in-time constraints
\begin{equation} \label{eq:constr} \begin{gathered}
    x_k \in \mathcal X \defas \set{x | h_x(x) \leq 0}\\
    u_k \in \mathcal U \defas \set{u | h_u(u) \leq 0}\mathrlap,
\end{gathered}\end{equation}
where $k \in \mathbb N$ is the discrete-time index, $x_k \in \mathbb R^{n_x}$ is the state vector, $u_k \in \mathbb R^{n_u}$ is the control input, $\mathcal X$ is the state constraint set, $\mathcal U$ is the control input constraint set, $f\colon \mathbb R^{n_x} \times \mathbb R^{n_u} \to \mathbb R^{n_x}$, $h_x\colon \mathbb R^{n_x} \to \mathbb R$, and $h_u\colon \mathbb R^{n_u} \to \mathbb R$.

The following assumptions ensure that the system \ref{eq:nonlinearsysEqs} admits a well-posed tracking problem.

\begin{assume} \label{as:system}
    The function $f$ is locally Lipschitz, and the functions $h_x$ and $h_u$ are continuous. There exists a set of steady-state admissible references $\mathcal R \subseteq \mathbb R^{n_r}$ and continuous functions $\bar x_{(\cdot)}\colon \mathcal R \to \mathcal X$ and $\bar u_{(\cdot)}\colon \mathcal R \to \mathcal U$ such that for all $r \in \mathcal R$, $\bar x_r = f(\bar x_r, \bar u_r)$.
\end{assume}

We also define the set of strictly steady-state admissible references as
\begin{equation} \label{eq:admissibleRef}
    \mathcal R_\epsilon \defas \set{r \in \mathcal R | h_x(\bar x_r) \leq -\epsilon,\, h_u(\bar u_r) \leq -\epsilon}
\end{equation}
for some $\epsilon \in \mathbb R_{>0}$.

We are now ready to formally state the control problem addressed by this paper.

\vspace{\thmsep}
\noindent\emph{Control Objectives}: Given the system \ref{eq:nonlinearsysEqs} subject to constraints \ref{eq:constr}, let $r \in \mathcal R_\epsilon$ be a target reference. Our objective is to design a state-feedback law $\kappa\colon \mathcal X \times \mathcal R_\epsilon \to \mathcal U$ such that the closed-loop system $x_{k+1} = f(x_k, \kappa(x_k, r))$ satisfies
\begin{enumerate}
    \item Safety: $x_k \in \mathcal X$ and $u_k \in \mathcal U$ for all $k \in \mathbb N$;
    \item Asymptotic stability: $\lim_{k\to\infty} x_k = \bar x_r$, where the equilibrium point $\bar x_r$ is asymptotically stable.
\end{enumerate}

In this paper, we focus on nominal MPC without explicitly considering disturbances for brevity and to better focus on our novel contribution. The PathFG can be applied directly to problems with disturbances by pairing it with a robust MPC controller, e.g., one of the many formulations available in \cite{ref:Kouvaritakis2016}.

\section{Controller Design} \label{sec:controllerDesign}

For this constrained control problem, we propose the PathFG unit to integrate a path planner and MPC controller. The idea behind the PathFG is to select intermediate references along the path that are always reachable within the prediction horizon of the MPC controller, thereby ensuring that the optimal control problem (OCP) underlying the MPC feedback law remains feasible at all times. The control architecture is illustrated in Figure~\ref{fig:ctrlArchitecture}. Given a desired target reference $r$, the path planner generates a path connecting the initial position to the target. The PathFG uses this path to create a sequence of intermediate references that are then passed to the MPC. 

\subsection{Nonlinear MPC}

Since the control problem is constrained, we approach the control objectives using a standard nonlinear MPC formulation, where the feedback policy is determined by solving the following OCP: \noDisplaySkip*
\begin{subequations} \label{eq:OCP} \begin{gather}
    \zeta^*(x, r) \defas \argmin_{\xi,\mu}\qty\bigg[V(\xi_N, r) +\! \sum_{i=0}^{N-1} \ell(\xi_i, \mu_i, r)] \label{eq:cost}\\
    \begin{aligned} \label{eq:MPCConstr}
        \text{s.t.} \quad & \xi_0 = x\\
            & \xi_{i+1} = f(\xi_i, \mu_i), && \forall i \in \mathbb N_{[0,N-1]}\\
            & \xi_i \in \mathcal X,\, \mu_i \in \mathcal U, && \forall i \in \mathbb N_{[0,N-1]}\\
            & (\xi_N, r) \in \termset,
    \end{aligned}
\end{gather}\end{subequations}
where $N \in \mathbb N_{>0}$ is the length of the prediction horizon; $\mu \defas (\mu_0,\allowbreak \mu_1, \dots, \mu_{N-1})$ are the control inputs; $\xi \defas (\xi_0, \xi_1, \dots, \xi_N)$ are the predicted states; $\ell\colon \mathcal X \times \mathcal U \times \mathcal R_\epsilon \to \mathbb R_{\geq0}$ is the stage cost; $V\colon \termset \to \mathbb R_{\geq0}$ is the terminal cost; and $\termset \subseteq \mathcal X \times \mathcal R$ is the terminal set.

The solution $\zeta^*$ is well defined only if the terminal set $\termset$ associated with the target reference $r$ is reachable in $N$ steps from the current state $x$. The feasible set is the set of $(x, r)$ pairs for which the OCP \ref{eq:OCP} admits a solution, i.e.,
\begin{equation}
    \Gamma \defas \set{(x, r) \in \mathcal X \times \mathcal R_\epsilon | \exists (\xi, \mu) \text{ s.t. \ref{eq:MPCConstr}}}.
\end{equation}
This feasible set $\Gamma$ is the $N\!$-step backward reachable set of $\termset$. For any $(x, r) \in \Gamma$, the functions $\xi^*$ and $\mu^*$ are defined as the $\xi$- and $\mu$-components of $\zeta^*$:
\begin{equation} \label{eq:OCPsolution}
\begin{multlined}
    \zeta^*(x, r) = (\xi^*(x, r), \mu^*(x, r))\\
    = (\xi_0^*, \xi_1^*, \dots, \xi_N^*, \mu_0^*, \mu_1^*, \dots, \mu_{N-1}^*).
\end{multlined}
\end{equation}
The cost associated with $\zeta^*(x, r)$ is
\begin{equation}
    J(x, r) \defas V(\xi_N^*, r) +\! \sum_{i=0}^{N-1} \ell(\xi_i^*, \mu_i^*, r).
\end{equation}

With the OCP established, we have all the elements necessary to define the MPC feedback policy $\kappa\colon \Gamma \to \mathcal U$. Given any $(x, r) \in \Gamma$, the MPC control law $\kappa$ is
\begin{equation} \label{eq:MPCPolicy}
    \kappa(x, r) \defas \mu_0^*(x, r),
\end{equation}
where $\mu_0^*(x, r)$ selects the value $\mu_0^*$ from the OCP solution \ref{eq:OCPsolution}. To ensure that the OCP \ref{eq:OCP} guarantees an asymptotically stabilizing MPC feedback policy, we impose the following assumptions.

\begin{assume} \label{as:nonlinSysStability} \begin{assumelist}
    \item The stage cost $\ell$ is uniformly continuous and satisfies $\ell(\bar x_r,\allowbreak \bar u_r, r) = 0$, and there exists $\gamma \in K_\infty$ such that $\ell(x, u , r) \geq \gamma(\norm{x - \bar x_r})$ for all $(x, r) \in \Gamma$ and $u \in \mathcal U$. \label{as:nonlinStageCost} % inserted here because \if@nobreak is false at the start of the \item, which would cause a page break after the heading and the first item

    \item\label{as:nolinTermCost} The terminal cost $V$ is uniformly continuous, satisfies $V(\bar x_r, r) = 0$, $V(x, r) \geq 0$, for all $(x, r) \in \termset$, and there exists a terminal control law $\kappa_\subtext{T}\colon \termset \!\to \mathcal U$ such that for all $(x, r) \in \termset$,
    \begin{equation} \label{eq:termCostIneq}
        V(x^\plus, r) - V(x, r) \leq -\ell(x, \kappa_\subtext{T}(x, r), r),
    \end{equation}
    where \noDisplaySkip**
    \begin{equation} \label{eq:termDynamics}
         x^\plus \defas f(x, \kappa_\subtext{T}(x, r))
    \end{equation}
    are the terminal dynamics.

    \item\label{as:termInvariant} $\termset$ is positively invariant and constraint admissible under the terminal dynamics \ref{eq:termDynamics}, i.e.,
    \begin{equation}\begin{multlined}
        (x, r) \in \termset \implies\\
        (x^\plus, r) \in \termset,\, x \in \mathcal X \text{ and } \kappa_\subtext{T}(x, r) \in \mathcal U.
    \end{multlined}\end{equation}

    \item\label{as:equilInTermSet} $\termset$ contains all admissible equilibrium points in its interior, i.e., $(\bar x_r, r) \in \Int\termset$, for all $r \in \mathcal R_\epsilon$.

    \item\label{as:OCPSolutionContinuity} The function $\zeta^*$ is Lipschitz continuous.
\end{assumelist}\end{assume}

Assumptions~\ref{as:nonlinStageCost,as:nolinTermCost,as:termInvariant,as:equilInTermSet} ensure that the MPC controller guarantees safety and asymptotically stabilizes $\bar x_r$ \cite[Theorem~4.4.2]{ref:Goodwin2006}. Assumption~\ref{as:OCPSolutionContinuity} ensures that $\xi^*$, $\mu^*$, and $\kappa$ are uniformly continuous. A more detailed discussion of when this assumption holds is provided in \cite[Section 7.1]{ref:Liao-McPherson2020}.

Although MPC is a powerful control strategy, its main limitation is that the OCP~\ref{eq:OCP} is infeasible if the state $x$ cannot be steered to the terminal set $\termset$ in $N$ steps, i.e., $(x, r) \notin \Gamma$. For targets beyond the prediction horizon, a common strategy is to increase $N$. However, this approach is computationally expensive, potentially precluding real-time implementation, and is also vulnerable to getting ``stuck'' in local minima in non-convex environments.

\begin{figure*}
    \centering\large\color{textcolor}
    \makebox[\linewidth]{%
    \scalebox{0.9}{\begin{tikzpicture}
        \node[anchor=south west, inner sep=0] (image) at (0,0) {\adjincludegraphics[width=0.49\linewidth, trim={0 {0.54\height} 0 0}, clip]{figures/pathFG__.pdf}};
        \begin{scope}[x={(image.south east)}, y={(image.north west)}, inner sep=1pt, yscale=1/(1-0.54), shift={(0,-0.54)}]
            \node[anchor=north, goodnotesblue] at (0.675, 0.95) {$\mathcal O_1$};
            \node[anchor=north, goodnotesblue] at (0.129, 0.78) {$\mathcal O_2$};
            \node[anchor=south, goodnotesblue] at (0.5, 0.585) {$\mathcal O_3$};
            \node[anchor=south, goodnotesblue] at (0.88, 0.6) {$\mathcal O_4$};
            \node[anchor=north, red!70!textcolor] at (0.02, 0.87) {$\p(0)$};
            \node[anchor=south] at (0.03, 0.92) {$\xi_{0|k-1}^*$};
            \node[anchor=south] at (0.26, 0.97) {$\xi_{1|k-1}^*$};
            \node[anchor=north west] at (0.5, 0.88) {$\xi_{N|k-1}^*$};
            \node[anchor=north, green!40!textcolor] at (0.52, 0.79) {$\p(s_k)$};
            \node[anchor=west, green!40!textcolor] at (0.74, 0.8) {$\tilde{\mathcal X}\_f(s_k)$};
            \node[anchor=south, red!70!textcolor] at (0.96, 0.73) {$\p(1) = r$};
        \end{scope}
    \end{tikzpicture}}\
    \scalebox{0.9}{\begin{tikzpicture}
        \node[anchor=south west, inner sep=0] (image) at (0,0) {\adjincludegraphics[width=0.49\linewidth, trim={0 0 0 0.5\height}, clip]{figures/pathFG__.pdf}};
        \begin{scope}[x={(image.south east)}, y={(image.north west)}, inner sep=1pt, yscale=2]
            \node[anchor=north, goodnotesblue] at (0.65, 0.412) {$\mathcal O_1$};
            \node[anchor=north, goodnotesblue] at (0.122, 0.235) {$\mathcal O_2$};
            \node[anchor=south, goodnotesblue] at (0.48, 0.04) {$\mathcal O_3$};
            \node[anchor=south, goodnotesblue] at (0.86, 0.06) {$\mathcal O_4$};
            \node[anchor=south west, green!40!textcolor] at (0.84, 0.25) {$\tilde{\mathcal X}\_f(s_{k+1})$};
            \node[anchor=south west] at (0.24, 0.42) {$\xi_{0|k}^*$};
            \node[anchor=north] at (0.31, 0.33) {$\xi_{1|k}^*$};
            \node[anchor=west] at (0.59, 0.26) {$\xi_{N|k}^*$};
            \node[anchor=north, green!40!textcolor] at (0.665, 0.2) {$\p(s_{k+1})$};
            \node[anchor=south, red!70!textcolor] at (0.96, 0.185) {$\p(1) = r$};
            \node[anchor=north, red!70!textcolor] at (0.01, 0.33) {$\p(0)$};
        \end{scope}
    \end{tikzpicture}}%
    }%
    \medskip
    \caption{\lineskiplimit=-10pt Illustration of the PathFG workflow. The left subfigure shows the $k\+$th timestep, where the auxiliary reference $s_k$ is computed via $s_k = g(\xi_{N|k-1}^*)$. This selects the furthest auxiliary reference $s_k$ on the path such that the optimal final predicted state $\xi_{N|k-1}^*$ lies inside the slice of the terminal set at $s_k$. The slice is defined as $\tilde{\mathcal X}_f(s_k) \defas \set{x | (x, s_k) \in \tilde{\termset}}$. In the right subfigure, the system advances one timestep and the $(k + 1)\+$th auxiliary reference $s_{k+1}$ is computed analogously through $s_{k+1} = g(\xi_{N|k}^*)$. This again selects the furthest auxiliary reference $s_{k+1}$ satisfying $(\xi_{N|k}^*, s_{k+1}) \in \tilde \termset$. The sets $\mathcal O_{\mspace{-2mu}j}$ are obstacles.}
    \label{fig:pathFG}
\end{figure*}
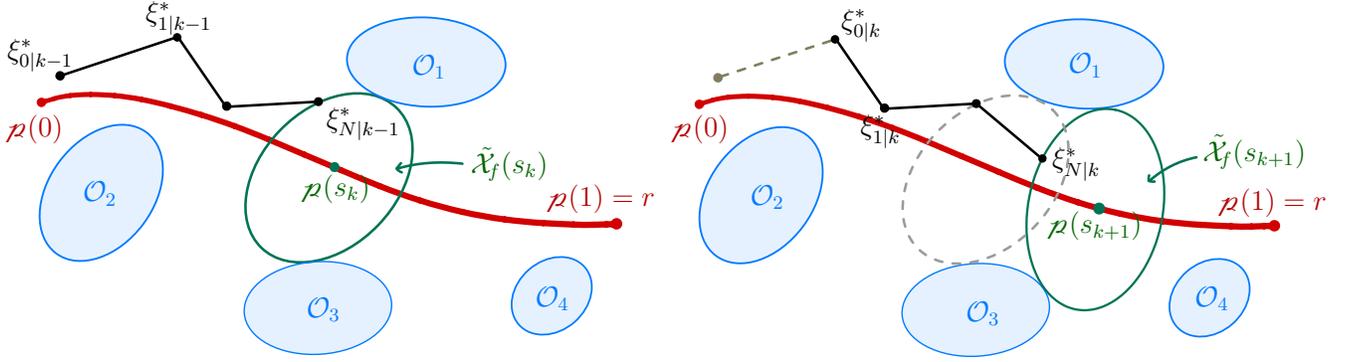

\subsection{Path Feasibility Governor}

One way to avoid getting stuck in local minima and to reduce the computational burden associated with long horizons is to pair the controller with a path planner. However, integrating path planners with MPC in a rigorous way is challenging. Our solution to this problem is the PathFG, an add-on unit that dynamically manipulates the references passed to the MPC controller and ensures that these references are pairwise reachable within the MPC prediction horizon. The PathFG essentially filters the progress of the system along the path in a way that guarantees constraint satisfaction, asymptotic stability, and recursive feasibility of MPC.

The PathFG is compatible with any path planner that, when given $x_0 \in \mathbb R^{n_x}$ and $r \in \mathcal R_\epsilon$, produces a feasible path $\mathcal P(x_0, r)$ satisfying \noDisplaySkip*
\begin{equation} \begin{gathered}
    \mathcal P(x_0, r) = \set{\p(s) | s \in [0, 1]} \label{eq:path}\\[0.3ex]
    \begin{aligned}
        \text{s.t.} \quad& \p\colon [0, 1] \to \mathcal R_\epsilon \text{ is continuous}\\
            & (x_0, \p(0)) \in \Gamma,\, \p(1) = r.
    \end{aligned}
\end{gathered}\end{equation}
This is only possible if such a feasible path exists in the first place, i.e., if $(x_0, r)$ is in
\begin{equation} \label{eq:D}
    \mathcal D \defas \set{(x_0, r) | r \in \mathcal R_\epsilon,\, \exists \mathcal P(x_0, r) \text{ s.t. \ref{eq:path}}}.
\end{equation}

\begin{assume} \label{as:path}
    Given $(x_0, r) \in \mathcal D$, the path planner returns a feasible path $\mathcal P(x_0, r)$ that satisfies \ref{eq:path}.
\end{assume}

The feasible path is a continuous function $\p\colon [0, 1] \to \mathcal R_\epsilon$ that maps an auxiliary reference $s \in [0, 1]$ to the set of strictly steady-state admissible references $\mathcal R_\epsilon$ defined in \ref{eq:admissibleRef}. Importantly, every point  $\p(s)$ on the planned path must satisfy $\p(s) \in \mathcal R_\epsilon$ for all $ s \in [0, 1]$. This guarantees that each corresponding equilibrium point is constraint-admissible, i.e., $(\bar x_{\p(s)}, \bar u_{\p(s)}) \in \mathcal X \times \mathcal U$ for all $ s \in [0, 1]$. The target reference $r$ coincides with the endpoint of the path $\p(1)$. In dynamic environments, the path is replanned whenever the obstacle positions or the target reference $r$ are modified.

With the auxiliary reference $s$ introduced, we now define the equilibrium state and control input functions $\bar x_{(\cdot)}$ and $\bar u_{(\cdot)}$ in terms of $s$ as
\begin{equation}
    \tilde x_{\mspace{-2mu}s} \defas \bar x_{\p(s)} \conjtext{and} \tilde u_s \defas \bar u_{\p(s)}.
\end{equation}
We also introduce variants of the OCP-related functions and sets in terms of $(x, s)$ pairs as follows:
\begin{equation} \label{eq:tildeJ}
    \tilde J(x, s) \defas J(x, \p(s))
\end{equation}
represents the optimal cost; 
\begin{equation} \label{eq:tildeTermset}
    \tilde\termset \defas \set{(x, s) | (x, \p(s)) \in \termset}
\end{equation}
is the terminal set;
\begin{equation}
    \tilde\Gamma \defas \set{(x, s) | (x, \p(s)) \in \Gamma}
\end{equation}
defines the feasible set;
\begin{equation} \label{eq:optimalState}
    \tilde\xi_N^*(x, s) \defas \xi_N^*(x, \p(s))
\end{equation}
is the optimal final predicated state;
\begin{equation} \label{eq:MPCctrlLaw}
    \tilde\kappa(x, s) \defas \kappa(x, \p(s))
\end{equation}
and \noDisplaySkip
\begin{equation} \label{eq:termCtrlLaw}
    \tilde\kappa_\subtext{T}(x, s) \defas \kappa_\subtext{T}(x, \p(s))
\end{equation}
are the MPC feedback policy and terminal control law, respectively.

We are now ready to introduce the PathFG policy. The idea behind the PathFG is to pick the next auxiliary reference $s_k$ along the path as far as possible such that $(x_k, s_k) \in \tilde\Gamma$. Then the MPC remains feasible and will take care of the rest. In practice, computing an explicit representation of $\tilde \Gamma$ is impossible for realistic problems. Instead, we use the predicted trajectory from the previous timestep to build an implicit under-approximation. The process is illustrated in Figure~\ref{fig:pathFG}. We first extract $\xi_{N|k-1}^* \defas \tilde \xi_N^*(x_{k-1}, s_{k-1})$, the last predicted state from the previous timestep, as defined in \ref{eq:optimalState}. We then pick the new auxiliary reference $s_k$ as the largest value of $s$ such that $\xi_{N|k-1}^*$ remains in the terminal set around $\tilde x_{\mspace{-2mu}s}$, formally
\begin{equation} \label{eq:PathFGDef}
    s_k = g(\xi_{N|k-1}^*) \defas \max\set{s \in [0, 1] | (\xi_{N|k-1}^*, s) \in \tilde\termset}.
\end{equation}
Since $\xi_{N|k-1}^*$ was reachable in $N$ steps from $x_{k-1}$, it is reachable in $N - 1$ steps from $x_k$, which guarantees feasibility. The auxiliary reference is then passed to the MPC controller to generate a control action
\begin{equation}
    u_k = \tilde\kappa(x_k, s_k) 
\end{equation}
and the process continues. 

Putting it all together, the PathFG update equation is
\begin{equation}\begin{multlined}
    s_k = \tilde g(x_{k-1}, s_{k-1}) \defas g(\tilde \xi_{N}^*(x_{k-1},s_{k-1}))
\end{multlined}\end{equation}

In Section~\ref{sec:theory}, we formally prove that this choice of auxiliary reference update ensures recursive feasibility of the MPC and that the reference $\tilde x_{\mspace{-2mu}s}$ converges to $\bar x_r$ in finite time.

The PathFG framework offers several key advantages. 1)~It enables the delegation of non-convex constraints to a path planner. This reduces computational complexity and avoids the issue of the controller getting stuck in local minima, thereby enhancing the reliability of the MPC. 2)~Its modular design offers flexibility in the choice of path planners and MPC formulations. By keeping the path planning separate from the tracking process, the PathFG is also well-suited for dynamic environments with moving obstacles, where frequent replanning is required. 3)~The PathFG significantly enlarges the region of attraction of MPC, increasing it from the $N\!$-step backwards reachable set of $\termset$, to
\begin{equation} \label{eq:ROA}
    \mathcal D_x(r) \defas \set{x_0 | (x_0, r) \in \mathcal D},
\end{equation}
which encompasses all initial states from which a feasible path $\mathcal P(x_0, r)$ to the target $r$ can be generated. This expanded ROA significantly reduces the need for long prediction horizons when the target $r$ lies beyond the prediction horizon of MPC. 4)~The PathFG incurs minimal computational overhead due to its one-dimensional parameter search space when solving \ref{eq:PathFGDef}, enabling efficient computation using simple algorithms like the bisection method. Taken together, PathFG effectively addresses the integration challenge between the MPC and path planners in a scalable manner. In the following section, we present a formal theoretical analysis demonstrating that the framework satisfies all control objectives.

\section{Theoretical analysis} \label{sec:theory}

In this section, we examine the properties of the closed-loop system \ref{eq:nonlinearsysEqs} under the combined PathFG+\allowbreak MPC feedback policy. For notational simplicity, we omit the constant target reference $r$. The results readily extend to piecewise constant references.

The closed-loop dynamics of \ref{eq:nonlinearsysEqs} under the combined PathFG+\allowbreak MPC policy are
{\begin{subequations} \label{eq:compactDynamics} \begin{gather}
    s_{k+1} = \tilde g(x_k, s_k) \defas g(\tilde\xi_N^*(x_k, s_k)) \label{eq:compactPathFGDynamics}\\
    x_{k+1} = \tilde f(x_k, s_k) \defas f(x_k, \tilde\kappa(x_k, s_k)), \label{eq:noPathFGDynamics}
\end{gather}\end{subequations}
where $\tilde\xi_N^*$ is the optimal final predicated state \ref{eq:optimalState}, and $\tilde\kappa$ is the MPC control law \ref{eq:MPCctrlLaw}, with initial condition $(x_0, s_0)$.

We now show that the combined PathFG+\allowbreak MPC feedback policy satisfies the control objectives of safety and asymptotic stability stated in Section~\ref{sec:problemFormulation}.

The following theorem provides sufficient conditions under which the combined PathFG+\allowbreak MPC control law ensures recursive feasibility and satisfies the \emph{safety} control objective.

\begin{theorem}[Safety and Feasibility] \label{prop:safety}
    Let Assumptions~\ref{as:system,as:nonlinSysStability,as:path} hold. Then for any $r \in \mathcal R_\epsilon$ and $x_0 \in \mathcal D_x(r)$, the solution trajectory $\qty{(x_k, s_k)}_{k=0}^\infty$ to \ref{eq:compactDynamics} is well defined and satisfies $(x_k, s_k) \in \tilde\Gamma$, $x_k \in \mathcal X$, and $u_k \in \mathcal U$ for all $k \in \mathbb N$.
\end{theorem}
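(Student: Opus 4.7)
The plan is to prove the theorem by induction on $k$ with the single invariant $(x_k, s_k) \in \tilde{\Gamma}$. Once this invariant is propagated forward in time, the rest of the claim follows immediately: the OCP \ref{eq:OCP} admits an optimizer $(\xi_{\cdot|k}^*, \mu_{\cdot|k}^*)$, so $x_k = \xi_{0|k}^* \in \mathcal{X}$ and $u_k = \tilde{\kappa}(x_k, s_k) = \mu_{0|k}^* \in \mathcal{U}$ by the constraints in \ref{eq:MPCConstr}, and \ref{eq:compactDynamics} can then be evaluated to produce $(x_{k+1}, s_{k+1})$, yielding a well-defined trajectory.

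For the base case, since $x_0 \in \mathcal{D}_x(r)$, Assumption~\ref{as:path} furnishes a path with $(x_0, \p(0)) \in \Gamma$, and by the definition of $\tilde{\Gamma}$ this is the same as $(x_0, 0) \in \tilde{\Gamma}$; I would initialize $s_0 \defas 0$ (or any $s_0 \in [0,1]$ with $(x_0, s_0) \in \tilde{\Gamma}$) to seed the induction.

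The inductive step is the heart of the argument and adapts the standard MPC shifted-trajectory construction to a moving reference. Assuming $(x_k, s_k) \in \tilde{\Gamma}$, the optimizer at time $k$ satisfies the terminal constraint $(\xi_{N|k}^*, s_k) \in \tilde{\termset}$, so the set $\{s \in [0,1] \mid (\xi_{N|k}^*, s) \in \tilde{\termset}\}$ appearing in \ref{eq:PathFGDef} is non-empty; together with continuity of $\p$ and the standard closedness of $\termset$, it is compact, so the maximum defining $s_{k+1}$ is attained and $(\xi_{N|k}^*, s_{k+1}) \in \tilde{\termset}$ by construction. I would then build the candidate predicted trajectory at time $k+1$ by shifting the optimizer and appending $\tilde\kappa_\subtext{T}(\xi_{N|k}^*, s_{k+1})$ as the terminal control; the first $N-1$ stages are admissible because they come directly from the optimizer at time $k$, while admissibility of the final state and input together with the new terminal inclusion $(f(\xi_{N|k}^*, \tilde\kappa_\subtext{T}(\xi_{N|k}^*, s_{k+1})), s_{k+1}) \in \tilde{\termset}$ follow from Assumption~\ref{as:termInvariant} applied at $(\xi_{N|k}^*, s_{k+1})$. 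This exhibits a feasible pair for the OCP at $(x_{k+1}, s_{k+1})$, closing the induction.

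The main obstacle, and the only point where the usual recursive-feasibility argument needs real adjustment, is that the reference index shifts from $s_k$ to $s_{k+1}$ between time $k$ and $k+1$. The standard MPC proof holds the reference fixed and reuses $\kappa_\subtext{T}(\cdot, s_k)$ for the tail extension, whereas here the tail must be generated by $\tilde\kappa_\subtext{T}(\cdot, s_{k+1})$, whose invariance guarantees fire only if $(\xi_{N|k}^*, s_{k+1}) \in \tilde{\termset}$. This is precisely why the PathFG update rule \ref{eq:PathFGDef} is phrased in terms of $\tilde{\termset}$ rather than $\tilde{\Gamma}$: it leaves just enough room in the last predicted state to absorb the reference shift, making the shifted-trajectory argument go through verbatim thereafter.
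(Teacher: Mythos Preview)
Your proposal is correct and follows essentially the same approach as the paper: induction on the invariant $(x_k,s_k)\in\tilde\Gamma$, with the inductive step carried by the shifted-trajectory construction where the tail is extended via $\tilde\kappa_\subtext{T}(\cdot,s_{k+1})$ and Assumption~\ref{as:termInvariant}. Your treatment is in fact slightly more careful than the paper's in two places: you explicitly initialize $s_0=0$ via the path condition $(x_0,\p(0))\in\Gamma$, and you argue compactness of the feasible set in \ref{eq:PathFGDef} to ensure the maximum is attained, whereas the paper simply notes that $s_k$ is feasible and asserts existence of $s_{k+1}$.
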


\begin{proof}
    See Appendix~\ref{sec:proofOfsafety}.
\end{proof}

Theorem~\ref{prop:safety} implies that, for any initial position and any admissible target $r$ that can be connected by a path, the closed-loop system governed by the combined PathFG+\allowbreak MPC control law is guaranteed to satisfy state and control input constraints, and the OCP is recursively feasible for all time.

Next, we prove that the combined PathFG+\allowbreak MPC feedback policy satisfies the \emph{asymptotic stability} control objective.

\begin{theorem}[Asymptotic Stability] \label{prop:AS}
    Given Assumptions~\ref{as:system,as:nonlinSysStability,as:path}, for any $r \in \mathcal R_\epsilon$, the point $(\bar x_r, 1)$ is an asymptotically stable equilibrium point for the closed-loop system \ref{eq:compactDynamics} governed by the combined PathFG+\allowbreak MPC feedback law, with region of attraction $(\mathcal D_x(r) \times [0, 1]) \cap \tilde\Gamma$.
\end{theorem}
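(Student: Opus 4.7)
The plan is to prove attractivity of $(\bar x_r, 1)$ and local Lyapunov stability separately, using $\tilde J$ from \ref{eq:tildeJ} as the primary Lyapunov-like quantity. Throughout, Theorem~\ref{prop:safety} guarantees recursive feasibility, so $(x_k, s_k) \in \tilde\Gamma$ for all $k$ and $\tilde J$, $\tilde\xi_N^*$, $\tilde\kappa$, and $\tilde\kappa_\subtext{T}$ are well defined and (uniformly) continuous by Assumptions~\ref{as:nonlinSysStability}. The MPC terminal constraint gives $(\tilde\xi_N^*(x_k, s_k), s_k) \in \tilde\termset$, so $s_k$ is itself admissible in the maximization~\ref{eq:PathFGDef} defining $s_{k+1}$, whence $s_{k+1} \geq s_k$; monotonicity together with $s_k \leq 1$ gives $s_k \to s_\infty \in [s_0, 1]$ and $|s_k - s_{k-1}| \to 0$.

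The main step is an approximate descent inequality for $\tilde J$. Warm-starting the OCP at $(x_k, s_k)$ with the shifted tail of the previous optimal solution, extended by the terminal controller $\tilde\kappa_\subtext{T}(\cdot, s_k)$ \dash feasible by \ref{eq:PathFGDef} and Assumption~\ref{as:termInvariant} \dash and invoking the terminal-cost inequality~\ref{eq:termCostIneq} at reference $\p(s_k)$ yields
\begin{equation*}
    \tilde J(x_k, s_k) \leq \tilde J(x_{k-1}, s_{k-1}) - \ell(x_{k-1}, u_{k-1}, \p(s_{k-1})) + \delta_k,
\end{equation*}
where $\delta_k$ collects the cost differences from evaluating the shifted $V$ and $\ell$ at $\p(s_k)$ instead of $\p(s_{k-1})$. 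Uniform continuity of $V$ and $\ell$ (Assumptions~\ref{as:nolinTermCost,as:nonlinStageCost}) together with continuity of $\p$ yields $\delta_k \to 0$, so telescoping combined with $\tilde J \geq 0$ and a Ces\`aro-type estimate forces $\liminf_k \ell(x_k, u_k, \p(s_k)) = 0$. The $K_\infty$ lower bound in Assumption~\ref{as:nonlinStageCost}, together with continuity of $\tilde x_{(\cdot)}$ and $s_k \to s_\infty$, then produces a subsequence $\{k_j\}$ along which $x_{k_j} \to \tilde x_{s_\infty}$.

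Next I would rule out $s_\infty < 1$ and promote the result to finite-time reachability of $s_k = 1$. If $s_\infty < 1$, Assumption~\ref{as:equilInTermSet} applied at $\p(s_\infty) \in \mathcal R_\epsilon$ gives $(\tilde x_{s_\infty}, s_\infty) \in \Int\tilde\termset$, so by continuity there exist $s' \in (s_\infty, 1]$ and $\eta > 0$ with $\mathcal B_\eta(\tilde x_{s_\infty}) \times \set{s'} \subseteq \tilde\termset$. Continuity of $\tilde\xi_N^*$ and the identity $\tilde\xi_N^*(\tilde x_{s_\infty}, s_\infty) = \tilde x_{s_\infty}$ (the OCP admits the trivial zero-cost trajectory at the equilibrium) then give $\tilde\xi_N^*(x_{k_j}, s_{k_j}) \in \mathcal B_\eta(\tilde x_{s_\infty})$ for $j$ large, so $s_{k_j+1} \geq s' > s_\infty$, contradicting $s_{k_j+1} \leq s_\infty$. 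Thus $s_\infty = 1$, and applying the same interior argument at $(\bar x_r, 1)$ shows that once the subsequence enters a small neighborhood of $\bar x_r$, $s_{k_j+1} = 1$, after which $s_k = 1$ for all subsequent $k$ by monotonicity. The closed loop then reduces to standard nominal MPC with fixed reference $r$, so classical Lyapunov-based MPC convergence~\cite[Theorem~4.4.2]{ref:Goodwin2006} yields $x_k \to \bar x_r$. Local Lyapunov stability of $(\bar x_r, 1)$ follows from the same interior argument: for $(x_0, s_0)$ sufficiently close to $(\bar x_r, 1)$, continuity of $\tilde\xi_N^*$ forces $s_1 = 1$, after which continuity of $\tilde f$ and the standard MPC Lyapunov function confine $x_k$ to an arbitrarily small neighborhood of $\bar x_r$.

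The principal obstacle is the descent inequality: unlike in ordinary MPC, $\tilde J$ need not be monotonically non-increasing because the reference $\p(s_k)$ shifts each step. Controlling the residual $\delta_k$ rests on uniform continuity rather than Lipschitz regularity of the costs, so the telescoped argument only delivers convergence of $x_k \to \tilde x_{s_\infty}$ along a subsequence; fortunately, the interior-of-terminal-set contradiction in the third step needs nothing stronger.
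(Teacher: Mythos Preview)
Your argument is correct and complete, but it follows a genuinely different route from the paper's proof. The paper first establishes that the error $e_k = x_k - \tilde x_{s_k}$ is ISS with respect to $\Delta s_k$ using $\tilde J$ as an ISS-Lyapunov function, then uses a compactness argument over the entire equilibrium path $\Sigma = \set{(\tilde x_s, \p(s)) | s \in [0,1]}$ to extract a \emph{uniform} jump size $\alpha > 0$ such that whenever $\xi^*_{N|k}$ is $\delta$-close to $\tilde x_{s_k}$, the auxiliary reference advances by at least $\alpha$; this yields an explicit finite-time bound $k_M \leq \lceil 1/\eta \rceil$ jumps for $s_k$ to reach $1$. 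Lyapunov stability is then obtained as a direct corollary of the ISS estimate, producing an explicit $\mathcal K$-function $\varsigma$ valid on the whole region of attraction.

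Your approach replaces the ISS machinery and the uniform-jump compactness lemma with a more elementary subsequence argument: the approximate-descent inequality for $\tilde J$ plus a Ces\`aro estimate gives $\liminf_k \ell = 0$, hence a subsequence $x_{k_j} \to \tilde x_{s_\infty}$, and a \emph{pointwise} interior-of-$\tilde\termset$ contradiction at the single limit $s_\infty$ suffices to force $s_\infty = 1$ and then $s_k = 1$ in finite time. Likewise, your Lyapunov-stability step is local (showing $s_1 = 1$ for nearby initial conditions and invoking standard MPC stability) rather than deriving a global $\mathcal K$-bound. The trade-off: the paper's route is more quantitative (explicit ISS gains and uniform step sizes, potentially useful for robustness or replanning analyses), while yours is shorter and avoids both the ISS framework and the compactness argument over $\Sigma$, needing only openness of $\tilde\termset$ at a single equilibrium. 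Both correctly exploit Assumptions~\ref{as:equilInTermSet} and~\ref{as:OCPSolutionContinuity} at the key step.
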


\begin{proof}
    See Appendix~\ref{sec:proofOfAS}.
\end{proof}

The state update functions $\tilde f$ and $\tilde g$ in \ref{eq:compactDynamics} have a domain of $(\mathcal D_x(r) \times [0, 1]) \cap \tilde\Gamma$ since their arguments must satisfy:
\begin{enumerate}
    \item $x \in \mathcal D_x(r)$, for a path to exist;
    \item $s \in [0, 1]$ by definition of $\p$;
    \item $(x, s) \in \tilde\Gamma$, for the OCP to remain feasible.
\end{enumerate}
Hence, Theorem~\ref{prop:AS} proves that the combined PathFG+\allowbreak MPC controller produces the largest possible region of attraction given the constraints. In turn, this implies that the PathFG+\allowbreak MPC controller works for any $(x_0, r) \in \mathcal D$ in \ref{eq:D}.

Theorems~\ref{prop:safety,prop:AS} prove that the combined PathFG+\allowbreak MPC feedback policy satisfies both the \emph{safety} and \emph{asymptotic stability} control objectives. Therefore, for any initial state $x_0$, from which a feasible path $\mathcal P(x_0, r)$ to the target $r$ exists, the PathFG generates a sequence of auxiliary references $s$ that gradually guides the system along the path to the target without violating any constraint. The combined control law significantly expands the ROA from $\set{x | (x, r) \in \Gamma}$, the set from which the terminal set can be reached within $N$ timesteps, to $\mathcal D_x(r)$, which includes all initial states that admit a feasible path to the target $r$. The ROA expansion is accomplished by leveraging a path planner to generate feasible paths, thus overcoming the challenges imposed by non-convex constraints.

\section{Implementation Example} \label{sec:implementation}

Implementing the PathFG requires computing several invariant sets and other related objects. In this section, we provide a worked example demonstrating this process for a practical case study.

\subsection{System and Controller}

We specialize to the case where the system in \ref{eq:nonlinearsysEqs} is linear and time-invariant:
\begin{equation} \label{eq:linearSysEq}
    x_{k+1} = f(x_k, u_k) = Ax_k + Bu_k,
\end{equation}    
where $A \in \mathbb R^{n_x\times n_x}\!$ and $B \in \mathbb R^{n_x\times n_u}\!$ are the system matrices. 

\begin{assume} \label{as:stabilizable}
    The pair $(A, B)$ is stabilizable.
\end{assume}

Assumption~\ref{as:stabilizable} is a necessary and sufficient condition \cite{ref:Limon2008} for the existence of a non-trivial kernel of
\begin{equation} \label{eq:Z}
    Z \equiv [A - I\,\ B].
\end{equation}
This implies that the system \ref{eq:linearSysEq} admits a family of equilibrium points satisfying \noDisplaySkip*
\begin{equation} \label{eq:linearEquilibrium}
    Z \mat[\bar x_r\\\bar u_r] = 0.
\end{equation}
This property allows us to write any solution of \ref{eq:linearEquilibrium} as
\begin{equation}
    \bar x_r = G_x r, \quad \bar u_r = G_u r,
\end{equation}
where the columns of $G \equiv [G_x^\T*[0]\; G_u^\T[0]]^\T$ form a basis for $\operatorname{ker} Z$. 

We specialize the constraints in \ref{eq:constr} using the following assumption.

\begin{assume} \label{as:convexConstraints}
    The system \ref{eq:linearSysEq} is subject to pointwise-in-time constraints: \noDisplaySkip*
    \begin{equation} \label{eq:stateAndControlConstraints}
        x_k \in \mathcal X = \mathcal C \cap \qty\bigg(\,\bigcap_{j=1}^m \mathcal O_{\mspace{-2mu}j}^\complement), \quad u_k \in \mathcal U,
    \end{equation}
    where $\mathcal C \subseteq \mathcal X$ and $\mathcal U$ are compact polyhedra, $\mathcal O_{\mspace{-2mu}j} \subseteq \mathbb R^{n_x}$ represent convex obstacles inflated by a safety margin $\oldepsilon$.
\end{assume}

With the system and constraints formulated, we next specialize the general nonlinear OCP in \ref{eq:OCP} to a linear formulation
\begin{equation}\begin{gathered}
    \begin{multlined} \label{eq:linearOCP}
        \zeta^*(x_k, r) = \argmin_{\xi,\mu} \qty\bigg[\norm{\xi_N - \bar x_r}_P^2\\[-1.5ex]
        +\! \sum_{i=0}^{N-1} (\norm{\xi_i - \bar x_r}_Q^2 + \norm{\mu_i - \bar u_r}_R^2)]
    \end{multlined}\\
    \begin{aligned}
        \text{s.t.} \quad & \xi_0 = x_k\\
            & \xi_{i+1} = A\xi_i + B\mu_i, && \forall i \in \mathbb N_{[0,N-1]}\\
            & \xi_i \in \mathcal X,\, \mu_i \in \mathcal U, && \forall i \in \mathbb N_{[0,N-1]}\\
            & (\xi_N, r) \in \termset,
    \end{aligned}
\end{gathered}\end{equation}
where $Q \in \mathbb R^{n_x \times n_x}\!$ and $R \in \mathbb R^{n_u \times n_u}$\! are the stage cost matrices, $P \in \mathbb R^{n_x \times n_x}\!$ is the terminal cost matrix. The terminal control law is chosen as
\begin{equation} \label{eq:lintermlaw}
    \kappa_\subtext{T}(x, r) = \bar u_r - K\,(x - \bar x_r),
\end{equation}
where $K \in \mathbb R^{n_u \times n_x}\!$ is a gain matrix. The terminal dynamics are
\begin{equation} \label{eq:linTermDynamics}
    x_{k+1} = (A - BK)\,x_k + B \bar u_r + BK \bar x_r.
\end{equation}

The following assumption ensures the linear OCP generates a stabilizing control law. 

\begin{assume} \label{as:linSysStability} \begin{assumelist}
    \item\label{as:linStageCost} The stage cost matrices satisfy $Q = Q^\T \succ 0$ and $R = R^\T \succ 0$.
    
    \item\label{as:decreasingLyapunov} The cost and gain matrices satisfy
    \begin{equation} \label{eq:LMI}
        -(K^\T RK + Q) \succeq (A - BK)^\T* P\+(A - BK) - P
    \end{equation}
    and $P = P^\T[2] \succ 0$.
\end{assumelist}\end{assume}

\begin{lemma} \label{prop:linProperties}   
   Let Assumption~\ref{as:linSysStability} holds, then Assumptions~\ref{as:nonlinStageCost,as:nolinTermCost,as:OCPSolutionContinuity} are satisfied.
\end{lemma}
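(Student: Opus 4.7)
The plan is to verify the three target subassumptions independently, since each reduces to a clean statement about the quadratic structure of the linear OCP \ref{eq:linearOCP}.

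For Assumption~\ref{as:nonlinStageCost}, I would set $e \defas x - \bar x_r$ and $v \defas u - \bar u_r$ and observe that $\ell(x, u, r) = e^\T Q e + v^\T R v$ is polynomial in $(x, u, r)$ (since $\bar x_r = G_x r$ and $\bar u_r = G_u r$ are linear in $r$), hence continuous, and uniformly continuous once restricted to the compact product $\mathcal X \times \mathcal U \times \mathcal R_\epsilon$. Direct substitution gives $\ell(\bar x_r, \bar u_r, r) = 0$, and $Q \succ 0$ from Assumption~\ref{as:linStageCost} yields $\ell(x, u, r) \geq \lambda_{\min}(Q)\norm{x - \bar x_r}^2$, so $\gamma(t) \defas \lambda_{\min}(Q) t^2 \in K_\infty$ serves as the required lower bound.

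For Assumption~\ref{as:nolinTermCost}, I would take $V(x, r) = \norm{x - \bar x_r}_P^2$, which is continuous and satisfies $V(\bar x_r, r) = 0$ and $V \geq 0$ directly from $P \succ 0$. Under the terminal control law \ref{eq:lintermlaw}, the error evolves as $e^+ = (A - BK) e$, so the Lyapunov decrease condition \ref{eq:termCostIneq} collapses to the quadratic form inequality $e^\T\bigl[(A - BK)^\T P (A - BK) - P + Q + K^\T R K\bigr] e \leq 0$, which holds for all $e$ exactly when the LMI \ref{eq:LMI} of Assumption~\ref{as:decreasingLyapunov} is satisfied.

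Assumption~\ref{as:OCPSolutionContinuity} is where I expect most of the difficulty. The cost in \ref{eq:linearOCP} is strongly convex in the decision variables $(\xi, \mu)$ because $P, Q, R \succ 0$ and the dynamics appear as linear equality constraints, so the minimizer is unique. Lipschitz continuity of this minimizer in the parameter $(x, r)$, however, is complicated by the non-convex state constraints $\mathcal X = \mathcal C \cap \bigcap_j \mathcal O_{j}^\complement$ in \ref{eq:stateAndControlConstraints}. I would follow the strategy flagged after Assumption~\ref{as:OCPSolutionContinuity} and appeal to \cite[Section~7.1]{ref:Liao-McPherson2020}, which provides constraint qualification and strict complementarity conditions under which the parametric solution of a strongly convex program with locally polyhedral feasible set is Lipschitz in the parameter. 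This citation is where the real work happens; the first two sub-claims are essentially direct verifications.
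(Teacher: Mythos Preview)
Your verification of Assumptions~\ref{as:nonlinStageCost} and~\ref{as:nolinTermCost} matches the paper's proof almost verbatim: the same $\gamma(t) = \lambda_{\min}(Q)\,t^2$ lower bound and the same reduction of the terminal decrease to the LMI~\ref{eq:LMI} via $e^+ = (A - BK)e$.

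The only divergence is in Assumption~\ref{as:OCPSolutionContinuity}. You flag the non-convexity of $\mathcal X = \mathcal C \cap \bigcap_j \mathcal O_{\mspace{-2mu}j}^\complement$ as the main obstacle and propose to invoke \cite[Section~7.1]{ref:Liao-McPherson2020}. The paper instead simply cites \cite{ref:Bemporad2002}. The reason the paper can do this is that in the implementation the obstacle constraints are replaced by the half-space over-approximations~\ref{eq:polyhedralConstraints}, so the OCP~\ref{eq:linearOCP} actually solved at each step is a strictly convex QP with polyhedral constraints; for that class, \cite{ref:Bemporad2002} gives the solution map as piecewise affine, hence Lipschitz, directly. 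Your route through \cite{ref:Liao-McPherson2020} is not wrong, but the non-convexity you are worried about has already been removed by the linearization step, so the heavier machinery is unnecessary here.
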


\begin{proof}
    See Appendix~\ref{sec:proofOflinProperties}.
\end{proof}

The obstacle avoidance constraints $\mathcal O_{\mspace{-2mu}j}^\complement$ are non-convex, which can be readily handled by the path planner. However, incorporating non-convex obstacle avoidance constraints in the MPC can make solving the OCP more expensive and less reliable. To address this, we use affine over-approximations of the obstacles so that the MPC handles only convex constraints while the path planner manages the non-convexities. The linearization process is illustrated in Figure~\ref{fig:halfspace}. The projection of a point $x$ onto an obstacle is denoted by
\begin{equation} \label{eq:projection}
    \Pi_{\mathcal O_{\mspace{-2mu}j}\!}(x) \defas \argmin_{y \in \mathcal O_{\mspace{-2mu}j}} \norm*{x - y}.
\end{equation}
We can then construct a half-space over-approximation of the obstacle as
\begin{equation} \label{eq:halfspace}
    H_{\mathcal O_{\mspace{-2mu}j}\!}(x) \defas \set{y | [x - \Pi_{\mathcal O_{\mspace{-2mu}j}\!}(x)]^\T\, [y - \Pi_{\mathcal O_{\mspace{-2mu}j}\!}(x)] \geq 0}.
\end{equation}

\begin{figure}
    \centering
    \includegraphics[width=0.9\linewidth]{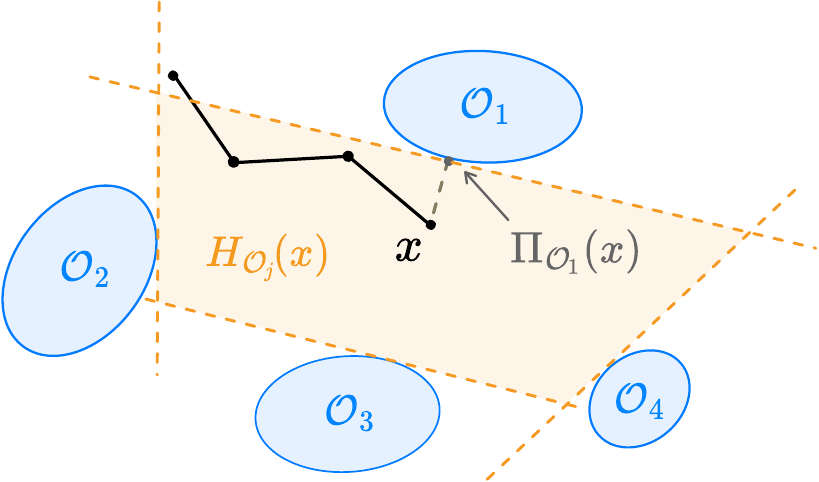}
    \medskip
    \caption{Linear approximations of obstacle constraints. To linearize around a point $x$, project $x$ onto obstacle $\mathcal O_1$ to obtain $\Pi_{\mathcal O_1\!}(x)$. Then, construct a hyperplane perpendicular to the vector connecting $x$ and $\Pi_{\mathcal O_1\!}(x)$. This hyperplane defines a half-space that serves as a linear approximation of the obstacle constraint $\mathcal O_1^\complement$. Repeating this process for all obstacles yields the polyhedral free space $\bigcap_{j=1}^m H_{\mathcal O_{\mspace{-2mu}j}\!}(x)$, which represents the collection of linear approximations of the obstacle constraints about the point $x$.}
    \label{fig:halfspace}
\end{figure}

\begin{lemma} \label{prop:linObsProperties}
    The linearized obstacle constraints $H_{\mathcal O_{\mspace{-2mu}j}\!}(x)$ in \ref{eq:halfspace} satisfy:
    \begin{enumerate}[label=\Alph{*}), ref=\arabic{lemma}\Alph{*}]
        \item\label{prop:linObsSafe} For any $x \in \Int\mathcal O_{\mspace{-2mu}j}^\complement$, $y \in H_{\mathcal O_{\mspace{-2mu}j}\!}(x)$ implies $y \in \mathcal O_{\mspace{-2mu}j}^\complement$.

        \item\label{prop:linObsInterior} $x \in \Int\mathcal O_{\mspace{-2mu}j}^\complement$ implies $x \in \Int[H_{\mathcal O_{\mspace{-2mu}j}\!}(x)]$.
    \end{enumerate}
\end{lemma}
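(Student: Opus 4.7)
Both parts follow from the standard projection theorem for closed convex sets. Since $\mathcal O_j$ is closed (its $\oldepsilon$-inflation ensures this) and convex, the projection $\Pi_{\mathcal O_j}(x)$ is uniquely defined for $x\notin\mathcal O_j$, and the variational characterization gives
\[
    [x-\Pi_{\mathcal O_j}(x)]^\T\,[y-\Pi_{\mathcal O_j}(x)]\leq 0\quad\forall y\in\mathcal O_j.
\]
Geometrically, the hyperplane through $\Pi_{\mathcal O_j}(x)$ with normal $x-\Pi_{\mathcal O_j}(x)$ supports $\mathcal O_j$ at $\Pi_{\mathcal O_j}(x)$ and separates it from $x$; $H_{\mathcal O_j}(x)$ is precisely the closed half-space on the $x$-side of this hyperplane. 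This picture is essentially what the lemma is trying to formalize, so the proof is mostly an unpacking of the projection theorem.

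For Part A, the plan is a direct contrapositive argument. Suppose $y\in\mathcal O_j$: the projection inequality yields $[x-\Pi_{\mathcal O_j}(x)]^\T[y-\Pi_{\mathcal O_j}(x)]\leq 0$, while membership in $H_{\mathcal O_j}(x)$ requires $\geq 0$. The two inequalities force equality, placing $y$ on the supporting hyperplane itself, which is a proper (nonzero-normal) hyperplane because $x\in\Int\mathcal O_j^\complement$ guarantees $x\neq\Pi_{\mathcal O_j}(x)$. A standard supporting-hyperplane argument then shows that $\mathcal O_j$ can touch this hyperplane only along its boundary, so no interior point of $\mathcal O_j$ belongs to $H_{\mathcal O_j}(x)$, yielding the stated inclusion $H_{\mathcal O_j}(x)\subseteq\mathcal O_j^\complement$.

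For Part B, I would simply substitute $y=x$ into the defining expression of $H_{\mathcal O_j}(x)$ and observe that $[x-\Pi_{\mathcal O_j}(x)]^\T[x-\Pi_{\mathcal O_j}(x)]=\norm{x-\Pi_{\mathcal O_j}(x)}^2$. Since $x\in\Int\mathcal O_j^\complement$ and $\mathcal O_j$ is closed, the distance $\norm{x-\Pi_{\mathcal O_j}(x)}$ is strictly positive, so this quantity is strictly positive. Strict inequality places $x$ in the open interior of the closed half-space $H_{\mathcal O_j}(x)$, which is exactly the claim.

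The main subtlety, and the only place where the proof requires care, is the boundary of $\mathcal O_j$ in Part A: the point $\Pi_{\mathcal O_j}(x)$ itself lies in both $\mathcal O_j$ and $H_{\mathcal O_j}(x)$ (satisfying the defining inequality with equality), so a literal reading of $H_{\mathcal O_j}(x)\subseteq\mathcal O_j^\complement$ fails on $\partial\mathcal O_j$. I expect the proof to resolve this either by interpreting $\mathcal O_j^\complement$ as the closure of the open complement, or by invoking the $\oldepsilon$-inflation to argue that any boundary contact with the linearized obstacle still corresponds to a strictly safe distance from the actual physical obstacle and so is harmless for the downstream safety guarantees.
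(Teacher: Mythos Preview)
Your proof is correct, and Part~B is identical to the paper's. For Part~A you take a somewhat different route: you invoke the variational characterization of the projection onto a closed convex set as a known result, so that $y\in\mathcal O_j$ immediately gives $[x-\Pi_{\mathcal O_j}(x)]^\T[y-\Pi_{\mathcal O_j}(x)]\leq 0$, contradicting strict membership in $H_{\mathcal O_j}(x)$. The paper instead argues from first principles: assuming some $y\in\mathcal O_j\cap\Int H_{\mathcal O_j}(x)$ exists, it explicitly constructs a point $\Pi_x+t(y-\Pi_x)$ on the segment from $\Pi_x$ to $y$ that is strictly closer to $x$ than $\Pi_x$, contradicting the definition of the projection. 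Your approach is shorter and leans on a standard theorem; the paper's is more self-contained but effectively re-proves that theorem in situ.

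On the boundary subtlety you flag: the paper resolves it by declaring the inflated obstacle $\mathcal O_j$ to be \emph{open} (not closed, as you guessed), so that $\Pi_{\mathcal O_j}(x)\in\partial\mathcal O_j\subseteq\mathcal O_j^\complement$ and the inclusion $H_{\mathcal O_j}(x)\subseteq\mathcal O_j^\complement$ holds without exception. Openness is also what lets the paper pass from $H\cap\mathcal O_j\neq\emptyset$ to $\Int H\cap\mathcal O_j\neq\emptyset$ and hence obtain a point with \emph{strict} inequality to seed the contradiction. Your anticipated resolutions are on target; just note that the ``closed because inflated'' remark is not what the paper assumes.
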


\begin{proof}
    See Appendix~\ref{sec:proofOflinObsProperties}.
\end{proof}

To move the linearization process outside the OCP, we approximate the obstacle constraints around the previous predicted trajectory $\xi_{(\cdot)|k-1}^*$. By doing so, the resulting linearized state constraints collectively form a polyhedral free space for the predicted state $\xi_{i|k}$ as
\begin{equation} \label{eq:polyhedralConstraints}
    \xi_{i|k} \in \mathcal X_{i|k} \defas \mathcal C \cap \qty\bigg[\,\bigcap_{j=1}^m H_{\mathcal O_{\mspace{-2mu}j}\!}(\xi^*_{i+1|k-1})]
\end{equation}
for prediction index $i \in \mathbb N_{[0,N]}$ and timestep $k \in \mathbb N$. Linearizing the obstacles in this way ensures that the OCP is a convex program, which can be solved quickly and reliably.

\subsection{Construction of Terminal Set} \label{sec:ConstructTermSet}

Any terminal set that satisfies Assumptions~\ref{as:termInvariant,as:equilInTermSet} is admissible. Here, we adopt a Lyapunov-based approach proposed in \cite{ref:Convens2021} due to its computational efficiency.

We construct the terminal set as
\begin{equation} \label{eq:termSetConstruct}
    \termset = \set{(x, r) | \Delta(x, r) \leq 0},
\end{equation}
where \noDisplaySkip
\begin{equation}
    \Delta(x, r) \defas \max_{i\in\mathbb N_{[1,n]}}\! \alpha_i\,[V(x, r) - \Lambda_i(r)],
\end{equation}
$\alpha_i$ are positive scaling factors,
\begin{equation} \label{eq:LyapuValue}
     V(x, r) \defas \norm{\xi_N - \bar x_r}_P^2
\end{equation}
is the Lyapunov function, and
\begin{equation} \label{eq:lyapunovThreshold}
    \Lambda_i(r) \defas \frac{[d_i(r) - c_i(r)^\T \bar x_r]^2}{c_i(r)^\T* P^\inv* c_i(r)}
\end{equation}
is the Lyapunov threshold value for a linear constraint $c_i(r)^\T x \leq d_i(r)$. Since the constraints in \ref{eq:stateAndControlConstraints} are either already linear or can be over-approximated by a half-space representation using \ref{eq:halfspace}, we can express the linearized constraints as
\begin{equation} \label{eq:linearConstraint} \begin{multlined}
    \set{x | c_i(r)^\T x \leq d_i(r),\, \forall i \in \mathbb N_{[1,n]}} =\\[-1.2ex]
    \hspace{3em} \mathcal C \cap \set{x | \kappa_\subtext{T}(x, r) \in \mathcal U} \cap \qty\bigg[\,\bigcap_{j=1}^m H_{\mathcal O_{\mspace{-2mu}j}\!}(\bar x_r)].
\end{multlined}\end{equation}

\begin{lemma} \label{prop:linTermSet}
    The terminal set $\termset$ constructed in \ref{eq:termSetConstruct} satisfies Assumptions~\ref{as:termInvariant,as:equilInTermSet}.
\end{lemma}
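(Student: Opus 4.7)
The plan is to verify Assumption~\ref{as:termInvariant} (positive invariance plus constraint admissibility under the terminal dynamics) and Assumption~\ref{as:equilInTermSet} (equilibria in the interior) by exploiting the fact that $\termset$ is the intersection of $P$-ellipsoids, each being the largest sub-level set of $V(\phdot, r)$ contained in a linearized constraint half-space. The main technical step is recognizing that $\Lambda_i(r)$ is precisely the ``sub-level threshold'' at which $V(\phdot, r) \leq \Lambda_i(r)$ is contained in $\set{x | c_i(r)^\T x \leq d_i(r)}$, so that the PathFG-style terminal set becomes a straightforward consequence.

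First I would establish the following geometric lemma. By Cauchy\dash Schwarz applied to the change of variables $y = P^{1/2}(x - \bar x_r)$, the supremum of $c_i(r)^\T x$ over the ellipsoid $\norm{x - \bar x_r}_P^2 \leq \Lambda_i(r)$ is $c_i(r)^\T \bar x_r + \sqrt{\Lambda_i(r) \phdot c_i(r)^\T P^\inv c_i(r)}$. Substituting the definition \ref{eq:lyapunovThreshold} (and using $d_i(r) \geq c_i(r)^\T \bar x_r$, which holds because $r \in \mathcal R_\epsilon$ implies strict feasibility of $\bar x_r$), this supremum equals $d_i(r)$. Consequently
\[
    V(x, r) \leq \Lambda_i(r) \implies c_i(r)^\T x \leq d_i(r),
\]
and, thanks to the strict inequality $d_i(r) > c_i(r)^\T \bar x_r$, we even get $\Lambda_i(r) > 0$.

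Next I would prove constraint admissibility. If $(x, r) \in \termset$, then $V(x, r) \leq \Lambda_i(r)$ for every $i$, so by the sub-level containment $x$ lies in every linearized constraint half-space \ref{eq:linearConstraint}. In particular $x \in \mathcal C$, $\kappa_\subtext{T}(x, r) \in \mathcal U$, and $x \in H_{\mathcal O_{\mspace{-2mu}j}\!}(\bar x_r)$ for every $j$. Since $r \in \mathcal R_\epsilon$ we have $\bar x_r \in \Int \mathcal O_{\mspace{-2mu}j}^\complement$, so Lemma~\ref{prop:linObsSafe} promotes $x \in H_{\mathcal O_{\mspace{-2mu}j}\!}(\bar x_r)$ to $x \in \mathcal O_{\mspace{-2mu}j}^\complement$. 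Thus $x \in \mathcal X$ and $\kappa_\subtext{T}(x, r) \in \mathcal U$.

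Then I would prove forward invariance. Using \ref{eq:linTermDynamics} and $\bar x_r = A \bar x_r + B \bar u_r$, the error $e \defas x - \bar x_r$ evolves as $e^\plus = (A - BK)\+e$. By the LMI \ref{eq:LMI} of Assumption~\ref{as:decreasingLyapunov},
\[
    V(x^\plus, r) - V(x, r) = e^\T \big[(A - BK)^\T P\+(A - BK) - P\big]\+ e \leq 0,
\]
so $V(x^\plus, r) \leq V(x, r) \leq \Lambda_i(r)$ for every $i$, giving $(x^\plus, r) \in \termset$. Combined with the previous step, this establishes Assumption~\ref{as:termInvariant}.

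Finally, for Assumption~\ref{as:equilInTermSet}, at $x = \bar x_r$ we have $V(\bar x_r, r) = 0$, and by the geometric lemma $\Lambda_i(r) > 0$ for every $i$ (since all constraints hold strictly at $\bar x_r$: the polyhedral pieces because $r \in \mathcal R_\epsilon$, and the obstacle pieces by Lemma~\ref{prop:linObsInterior}). Hence $\Delta(\bar x_r, r) = -\min_i \alpha_i \Lambda_i(r) < 0$, which by continuity of $\Delta$ places $(\bar x_r, r)$ in $\Int \termset$. The main obstacle is the ellipsoid\dash half-space containment argument; once that is in hand, invariance follows routinely from the Lyapunov LMI and admissibility is automatic because the terminal set was built as the intersection of certified sub-level sets.
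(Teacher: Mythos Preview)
Your proposal is correct and follows essentially the same approach as the paper: you establish the ellipsoid\dash half-space containment (which the paper cites from \cite{ref:Nicotra2018} rather than proving via Cauchy\dash Schwarz), then deduce constraint admissibility via Lemma~\ref{prop:linObsSafe}, invariance from the Lyapunov LMI~\ref{eq:LMI}, and the interior condition from $\Lambda_i(r) > 0$ together with continuity of $\Delta$. The only point the paper makes more explicit is why $\Delta$ is continuous in $(x,r)$\dash it argues that projections onto the convex obstacles are Lipschitz, hence each $\Lambda_i$ is continuous\dash which you assert but do not justify.
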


\begin{proof}
    See Appendix~\ref{sec:proofOflinTermSet}.
\end{proof}

\section{Numerical Example} \label{sec:numericalEx}
 
In this section, we numerically validate the PathFG framework through simulations of quadrotor navigation in obstacle-rich environments. The results demonstrate that the PathFG framework guarantees safety, achieves finite-time convergence to the target reference, and significantly reduces the computational burden of the MPC at the expense of a moderate performance reduction compared to the ungoverned MPC. All simulations were executed on an ASUS G614J laptop (\SI{2.2}{GHz} Intel i9, \SI{32}{GB} RAM) running MATLAB R2024b.

\subsection{Dynamic Model}

We consider a quadrotor navigating an obstacle-cluttered environment. For illustrative purposes, we adopt physical parameters based on the Crazyflie platform. The dynamic model follows \cite{ref:Convens2021}. Many quadrotors use a cascaded control architecture, as illustrated in Figure~\ref{fig:cascadedContArch}, with an inner loop that converts angular rate and total thrust signals into motor commands, and an outer loop that controls position, velocity, and orientation. Here, we focus on the outer loop; flight controllers for the inner loop are well-established.

\begin{figure}
    \centering
    \includegraphics[width=\columnwidth, trim={1cm 0 0 0}]{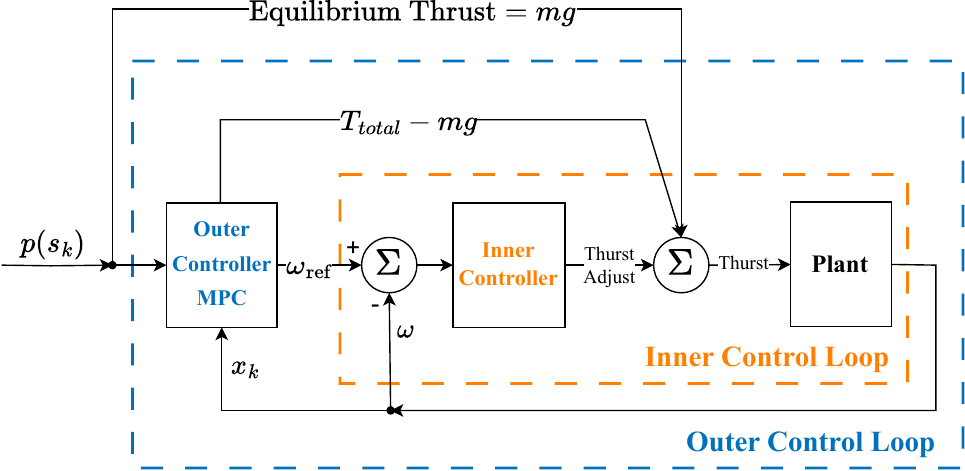}
    \smallskip
    \caption{Cascaded control architecture of the numerical example. The inner controller is a standard flight controller, while the outer controller is MPC governed by the PathFG.}
    \label{fig:cascadedContArch}
\end{figure}

We use the following state-space model for the outer-loop dynamics. The state vector is
\begin{equation}
    x = (p, \dot p, \psi) \in \mathbb R^9,
\end{equation}
in which $p \defas (p_x, p_y, p_z) \in \mathbb R^3$ and $\dot p \defas (\dot p_x, \dot p_y, \dot p_z) \in \mathbb R^3$ denote the position and the velocity of the body reference frame $\mathcal B$ with respect to the inertial frame $\mathcal I$, respectively, and $\psi \defas (\gamma, \beta, \alpha) \in \mathbb R^3$ denotes the Euler attitudes (roll, pitch and yaw, respectively) that re-align the axes of $\mathcal B$ with the axes of $\mathcal I$.

The control input is
\begin{equation}
    u = (T_\subtext{total}, \omega) \in \mathbb R^4,
\end{equation}
where $T_\subtext{total}$ is the total thrust generated by the four rotors, and $\omega \defas (\omega_x, \omega_y, \omega_z)$ are the angular velocities.

A natural parameterization of the equilibrium manifold is 
\begin{equation}\begin{gathered}
    \bar x_r = (r, 0, 0) = \Xi^\T r\\    
    \bar u_r = (mg, 0),
\end{gathered}\end{equation}
where $r$ is the reference for position $p$,  $\Xi \defas [I_3\ 0_{3\times6}] \in \mathbb R^{3\times9}$ denotes the projection matrix that extracts the agent's position from its full state. Here, $m = \SI{32}{g}$ is the mass of the agent, and $g \approx \SI{-9.81}{m/s^2}$ is the gravitational acceleration. 

Linearizing the dynamics of the system around $\bar x$ and $\bar u$ yields the following system. The state and control input are
\begin{subequations}
    \begin{equation}
        \Delta x \defas (\Delta p, \Delta\dot p, \Delta\psi) \defas x - \bar x
    \end{equation}
    and \noDisplaySkip
    \begin{equation}
        \Delta u \defas (\Delta T, \Delta\omega) \defas u - \bar u,
    \end{equation}
\end{subequations}
respectively. The linearized continuous-time state-space equation is then
\begin{equation}
    \Delta \dot x = A_\subtext{c}\+\Delta x + B_\subtext{c}\+\Delta u,
\end{equation}
where \noDisplaySkip
\begin{subequations}
    \begin{equation}
        A_\subtext{c} \defas \mat[
            0_3 & I_3 & 0_3\\
            0_3 & 0_3 & G  \\
            0_3 & 0_3 & 0_3
        ], \quad
        G \defas \mat[
            0  & g & 0\\
            -g & 0 & 0\\
            0  & 0 & 0
        ]
    \end{equation}
    and \noDisplaySkip
    \begin{equation}
        B_\subtext{c} \defas \mat[
            0_{5\times1} & 0_{5\times3}\\
            1/m          & 0_{1\times3}\\
            0_{3\times1} & I_3
        ]
    \end{equation}
\end{subequations}

To simplify notation, we will use these variables without the $\Delta$ prefix going forward. 

To obtain $x^\plus = Ax + Bu$, the linear continuous-time system is discretized using zero-order hold with a sampling time of \SI{0.1}{s}.

\begin{figure*}
    \centering
    \begin{subfigure}[t]{0.5\textwidth}
        \centering
        \includegraphics[width=\linewidth]{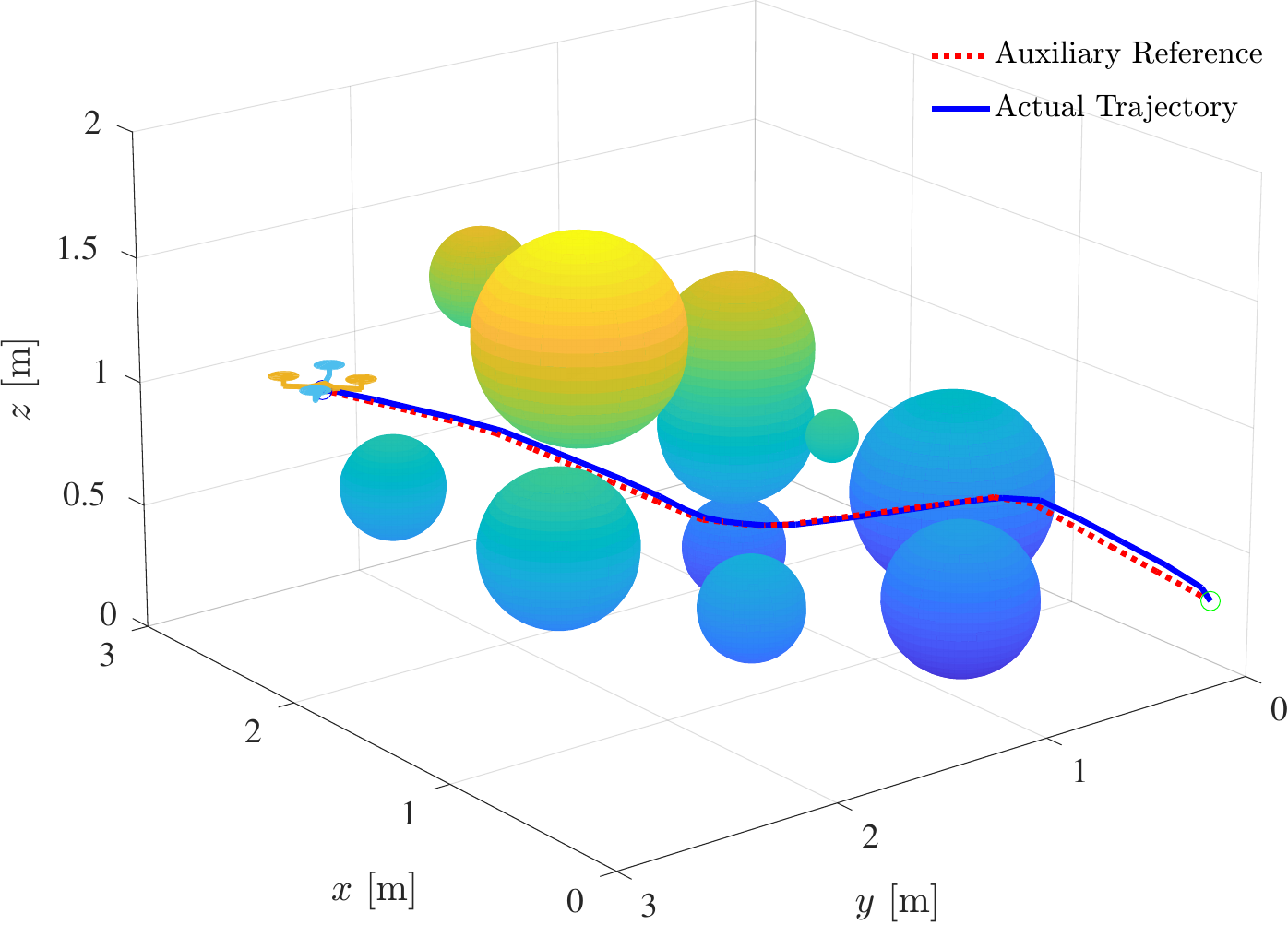}
        \caption{3D view of tracking RRT* path.}
        \label{fig:3dTrajRRT}
    \end{subfigure}\quad
    \begin{subfigure}[t]{0.45\textwidth}
        \centering
        \includegraphics[width=\linewidth]{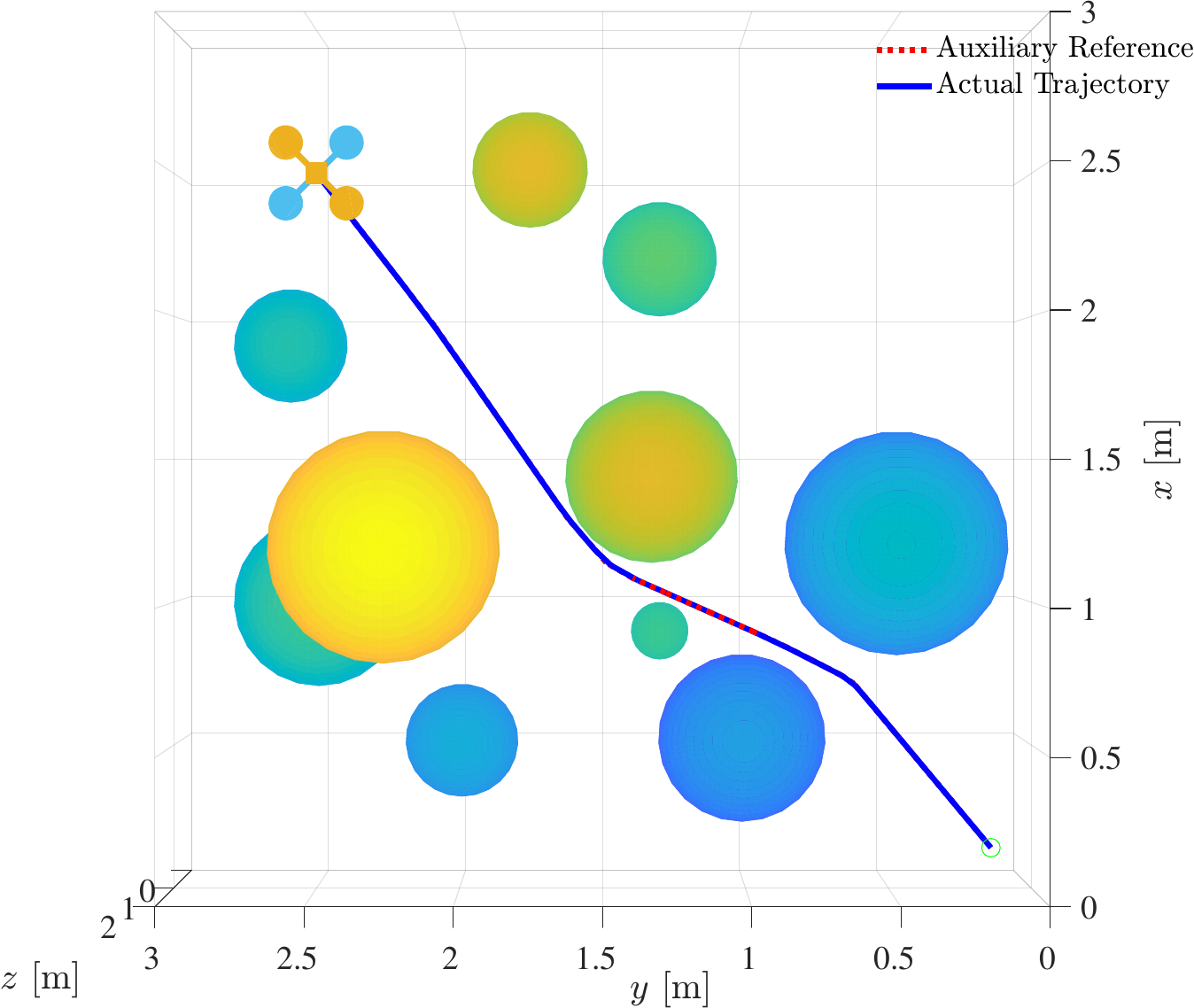}
        \caption{A 2D top-down view of Figure~\ref{fig:3dTrajRRT} (RRT*).}
        \label{fig:topDownViewRRT}
    \end{subfigure}\\[1em]
    \begin{subfigure}[t]{0.5\textwidth}
        \centering
        \includegraphics[width=\linewidth]{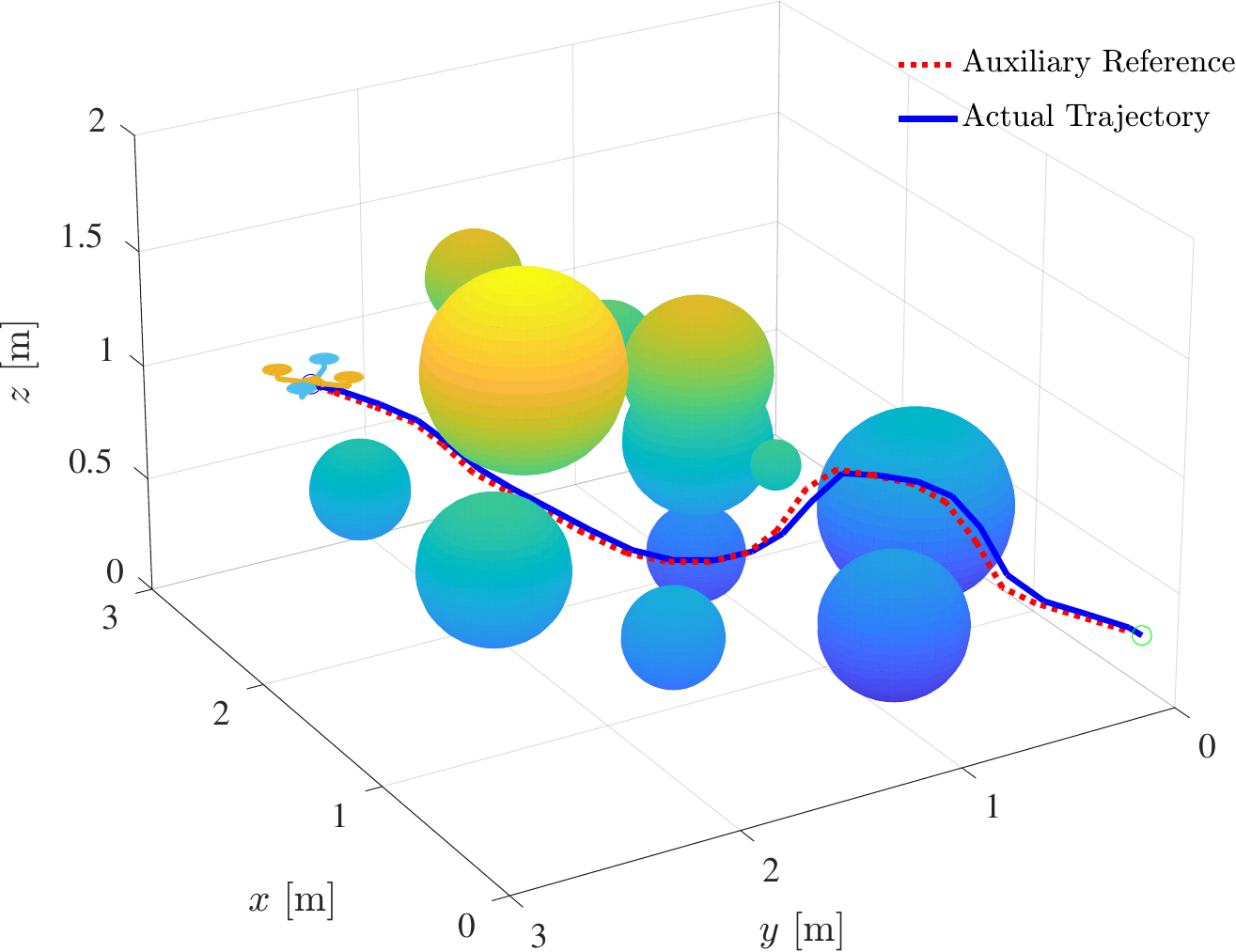}
        \caption{3D view of tracking potential field path.}
        \label{fig:3dTrajPotField}
    \end{subfigure}\quad
    \begin{subfigure}[t]{0.45\textwidth}
        \centering
        \includegraphics[width=\linewidth]{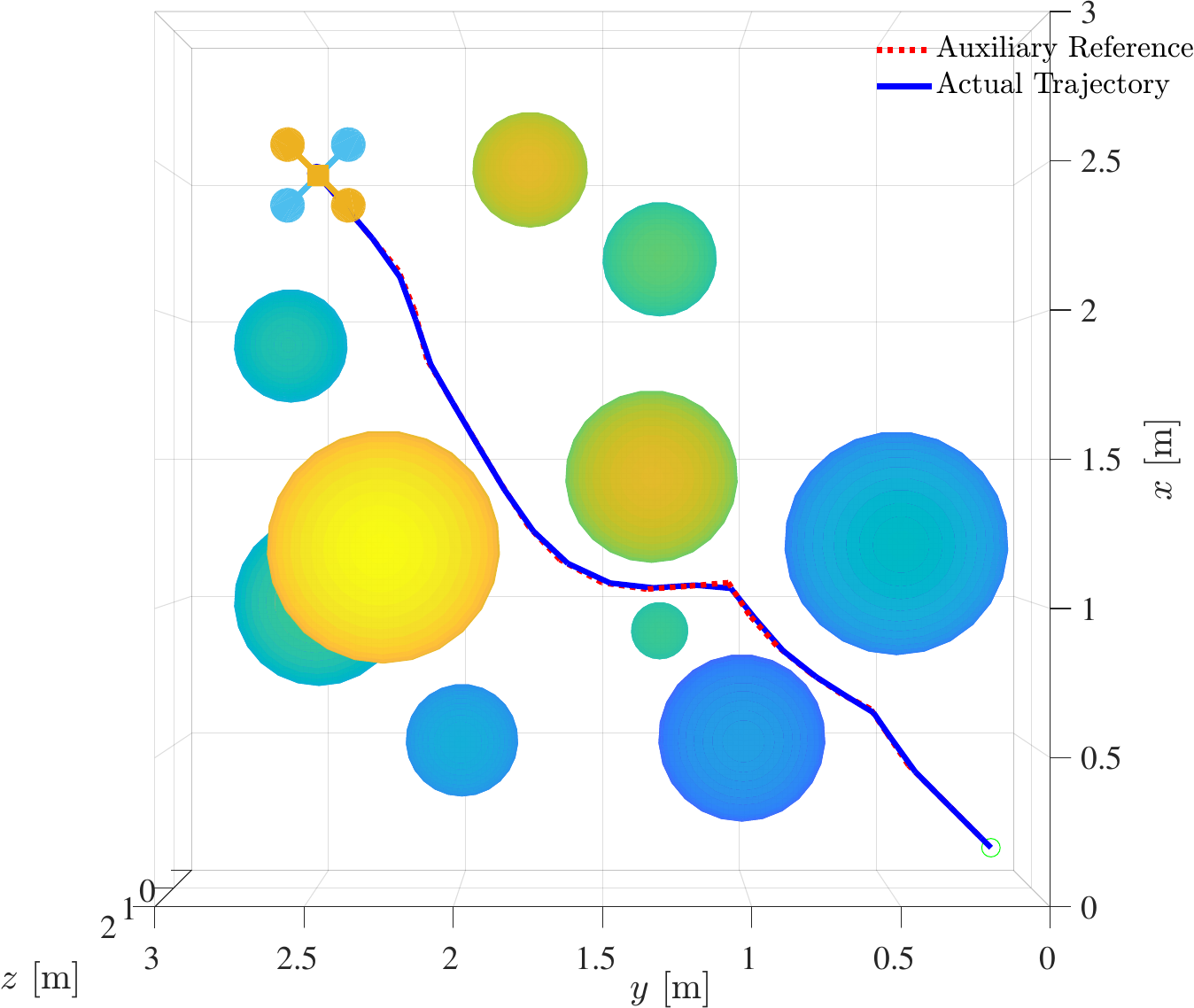}
        \caption{A 2D top-down view of Figure~\ref{fig:3dTrajPotField} (potential field).}
        \label{fig:topDownViewPotField}
    \end{subfigure}
    \bigskip
    \caption{Quadrotor navigation with obstacle avoidance demonstrating PathFG's compatibility with different path planners. Both RRT* (optimal) and potential field (suboptimal) planners generate collision-free paths, from which PathFG dynamically selects a sequence of auxiliary references (red dotted line). These references are passed to the MPC controller to guide the system from the start position $(\SI{0.1}{m}, \SI{0.1}{m}, \SI{0.3}{m})$ to the goal $(\SI{2.5}{m}, \SI{2.5}{m}, \SI{1}{m})$. The resulting trajectories (blue solid lines) demonstrate collision-free navigation through the obstacle-dense environments, with consistent performance across different path planner choices. }
    \label{fig:visualTraj}
\end{figure*}

\subsection{Constraints} \label{sec:numericalConstraints}

We specialize the obstacles as spheres, whose compliments take the form
\begin{equation}
    \mathcal O_{\mspace{-2mu}j}^\complement = \set{x | \norm{\Xi\+x - o_j} \geq r_{o_j} \!+ r_\subtext{a}}.
\end{equation}
The centers of the spherical obstacles are denoted by $o_j \in \mathbb R^3$, and their corresponding radii by $r_{o_j}$, where $j \in \mathbb N_{[1,m]}$. The radius of the agent is denoted by $r_\subtext{a} \defas \SI{0.08}{m}$.

In this case, the projection operator \ref{eq:projection} reduces to 
\begin{equation}
    \Pi_{\mathcal O_{\mspace{-2mu}j}\!}(x) = \Xi^\T \qty[o_j + \frac{\Xi\+x - o_j}{\norm{\Xi\+x - o_j}} \cdot (r_{o_j} \!+ r_a)]
\end{equation}
and the half-space approximation in \ref{eq:halfspace} becomes
\begin{equation} \label{eq:linSphereObstacle}
     H_{\mathcal O_{\mspace{-2mu}j}\!}(x) = \set{y | c_j(\Xi\+x)^\T y \leq d_j(\Xi\+x)},
\end{equation}
where \noDisplaySkip
\begin{equation} \label{eq:obstLinConstr} \begin{gathered}
    c_j(\Xi\+x) = \Xi^\T \wh{(o_j - \Xi\+x)}\\
    d_j(\Xi\+x) = \wh{(o_j - \Xi\+x)}^\T o_j - r_\subtext{a} - r_{o_j}.
\end{gathered}\end{equation}
The hat symbol denotes normalization: $\wh x \defas x/\norm x$.

By applying the formula \ref{eq:linSphereObstacle} to each predicted state $\xi_{i+1|k-1}^*$, we are able to obtain polyhedral approximations of state constraints across the entire predicted trajectory of OCP \ref{eq:linearOCP} as
\begin{equation}
    \mathcal X_{i|k} = \set{x | x_\subtext{min} \leq x \leq x_\subtext{max}} \cap \qty\bigg[\,\bigcap_{j=1}^m H_{\mathcal O_{\mspace{-2mu}j}\!}(\xi_{i+1|k-1}^*)],
\end{equation}
where $\thinmuskip=3mu plus 0.5mu x_\subtext{max} \defas (\SI{10}{m},\allowbreak \SI{10}{m},\allowbreak \SI{10}{m},\allowbreak \SI{1}{m/s},\allowbreak \SI{1}{m/s},\allowbreak \SI{1}{m/s},\allowbreak 0.2\pi,\allowbreak 0.2\pi,\allowbreak 0.2\pi)$ and $x_\subtext{min} \defas -x_\subtext{max}$ bound the system state.

The linear control input constraint set $\mathcal U$ is
\begin{equation}
    \mathcal U = [T_\subtext{min}{-}mg, T_\subtext{max}{-}mg] \times [\omega_\subtext{min}, \omega_\subtext{max}]^3,
\end{equation}
where $T_\subtext{max} \defas \SI{0.59}{N}$ and $T_\subtext{min} \defas \SI{0}{N}$ denote the maximum and minimum total thrusts, and $\omega_\subtext{max} \defas 0.5\pi$ and $\omega_\subtext{min} \defas -0.5\pi$ denote the maximum and minimum angular velocities, respectively.

\subsection{Primary Controller Implementation}

The cost matrices in \ref{eq:linearOCP} are chosen as
\begin{equation}\begin{gathered}
    \arraycolsep=0.2em
    Q = \mat[
        10 I_3 & 0       & 0      \\
        0      & 0.5 I_3 & 0      \\
        0      & 0       & 2.5 I_3
    ]
    \conjtext{and}
    R = \frac{I_4}{10}.
\end{gathered}\end{equation}
The terminal cost $P$ is obtained using the Discrete Algebraic Riccati Equation
\begin{equation}
    P = Q + A^\T PA - (A^\T PB)(R + B^\T PB)^{\!\inv}(B^\T PA),
\end{equation}
and the terminal feedback gain $K$ is the associated LQR gain
\begin{equation} \label{eq:termGain}
    K = (R + B^\T PB)^{\!\inv}(B^\T PA).
\end{equation}
Note that any terminal feedback gain $K$ satisfying the matrix inequality \ref{eq:LMI} is also admissible in lieu of the infinite-horizon LQR solution.

\subsection{PathFG Implementation} \label{sec:pathFGImpl}

The PathFG policy~\ref{eq:PathFGDef} is essentially a one-dimensional optimization problem that maximizes the auxiliary reference $s \in [0, 1]$ such that the pair $(\xi^*_N, s)$ lies within the invariant terminal set $\tilde \termset$. Since the search space of the optimization problem is just the interval $[0, 1]$, it can be efficiently solved using simple algorithms such as the bisection method.

\begin{figure}
    \centering
    \includegraphics[width=1.12\linewidth, center]{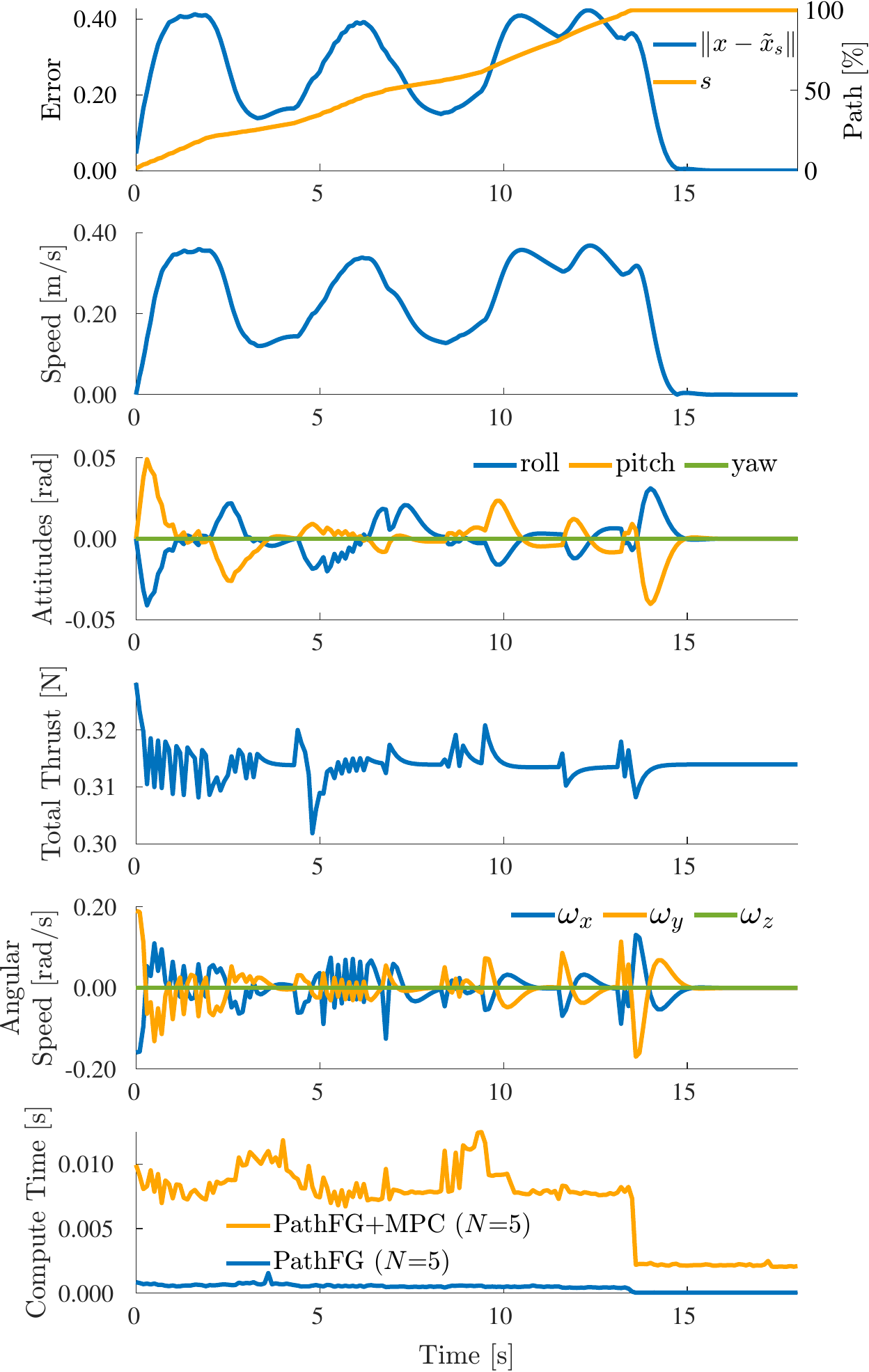}
    \caption{Closed-loop performance of the combined PathFG+\allowbreak MPC controller with prediction horizon $N = 5$. Subfigure~1 demonstrates convergence of the error signal $e \defas \norm{x - \tilde x_{\mspace{-2mu}s}}$ to zero as the auxiliary reference $s$ reaches its target value of $100\%$. Subfigures~2--5 verify that all quadrotor states and control inputs remain within their prescribed bounds, confirming constraint satisfaction. Subfigure 6 reveals that the total computation time of PathFG+\allowbreak MPC stays well below the \SI{0.1}{s} sampling period, showing real-time performance, while the PathFG component alone maintains an average computation time under \SI{0.001}{s}.}
    \label{fig:singlePredHorizon}
\end{figure}

\begin{figure}
    \centering
    \includegraphics[width=1.04\linewidth, center]{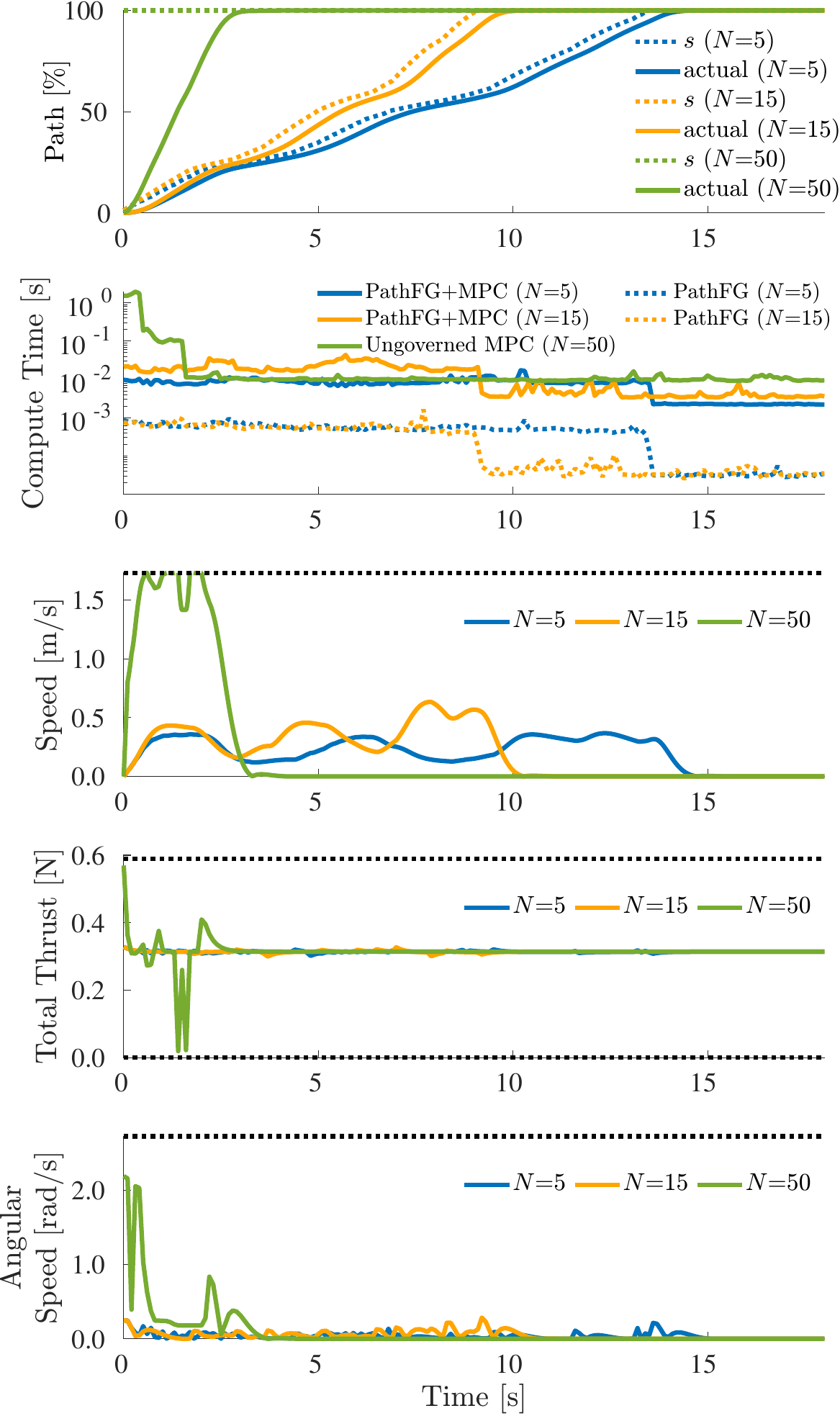}
    \caption{Comparison of closed-loop performance across different prediction horizons. The PathFG+\allowbreak MPC controller is evaluated for prediction horizons $N = 5$ and $N = 15$, while ungoverned MPC (i.e., without PathFG) is tested at $N = 50$. Subfigures~1--2 demonstrate the performance-computation trade-off \dash ungoverned MPC achieves the most aggressive trajectories but requires computation time that significantly exceeds the \SI{0.1}{s} sampling period. In contrast, the PathFG+\allowbreak MPC controller maintains real-time performance, though with slower convergence. As shown in Subfigures~3--5, all configurations strictly adhere to the prescribed state and control bounds (black dotted lines), demonstrating safe navigation through the obstacle-dense environments.} \label{fig:compareDifferPredHorizon}
\end{figure}

The PathFG policy also involves constructing a terminal set, for which we apply the procedure detailed in Section~\ref{sec:ConstructTermSet} with the following constraints (each row of the matrices will be used as a $c_i$ or $d_i$ in \ref{eq:linearConstraint}):
\begin{equation}
    \mat[
        \wh{(o_j-\Xi\+\tilde x_{\mspace{-2mu}s})}^\T \Xi\\
        I\\
        \mathllap-I\\
        \mathllap-K_\subtext{thrust}\\
        K_\subtext{thrust}\\
        \mathllap-K_\omega\\
        K_\omega
    ] x \, \leq \mat[
        \wh{(o_j-\Xi\+\tilde x_{\mspace{-2mu}s})}^\T o_j-r_\subtext{a}-r_{o_j}\\
        x_\subtext{max}\\
        -x_\subtext{min}\\
        T_\subtext{max}-mg-K_\subtext{thrust}\,\tilde x_{\mspace{-2mu}s}\\
        -T_\subtext{min}+mg+K_\subtext{thrust}\,\tilde x_{\mspace{-2mu}s}\\
        \omega_\subtext{max}-K_\omega\,\tilde x_{\mspace{-2mu}s}\\
        -\omega_\subtext{min}+K_\omega\, \tilde x_{\mspace{-2mu}s}
    ].
\end{equation}
Here, the obstacle constraints are linearized about the auxiliary reference $s$ using \ref{eq:linSphereObstacle}. The thrust gain $K_\subtext{thrust}$, defined as the first row of the terminal feedback gain $K$ in \ref{eq:termGain}, maps the state deviation to the additional thrust command. Similarly, the angular velocity gain $K_\omega$, defined as the corresponding rows of $K$, maps the state deviation to the $\omega_x$, $\omega_y$ or $\omega_z$ command.

\subsection{Simulation Results} 

In this section, we show the performance of the PathFG framework through MATLAB/Simulink simulations of Crazyflie navigation in obstacle-cluttered environments.

Figure~\ref{fig:visualTraj} illustrates PathFG's compatibility with any path planner that produces feasible paths satisfying Assumption~\ref{as:path}. We validate this capability using two representative planners: an optimal planner (RRT*) \cite{ref:Karaman2011} and a suboptimal planner (potential field) \cite{ref:Khatib1985}. Simulation results confirm that both planners generate collision-free paths, and the combined PathFG+\allowbreak MPC controller successfully tracks the planned path without constraint violations. 

Next, we evaluate the closed-loop performance of the combined PathFG+\allowbreak MPC control policy, as presented in Figure~\ref{fig:singlePredHorizon}. Using the RRT* planner with a short prediction horizon ($N = 5$), we highlight PathFG's essential role in preserving MPC feasibility. In this obstacle-rich environment, ungoverned MPC requires a minimum prediction horizon of 20 to maintain feasibility, while the combined PathFG+\allowbreak MPC controller successfully achieves safe navigation despite this aggressive horizon limitation of $N = 5$. This property is verified in Subfigure 1, where the error signal $e \defas \norm{x - \tilde x_{\mspace{-2mu}s}}$ converges to zero as the auxiliary reference $s$ approaches its target value of $100\%$. Subfigures 2--5 demonstrate bounded behavior in both the Crazyflie states and control inputs. Notably, the error signal in Subfigure 1 and the speed in Subfigure 2 show correlated patterns. This correlation reflects expected system dynamics, where increased velocity typically induces larger tracking errors. Furthermore, the PathFG exhibits minimal computational overhead thanks to its one-dimensional search space. As shown in Subfigure 6, the PathFG component requires on average less than \SI{0.001}{s}, while the combined PathFG+\allowbreak MPC controller maintains a computation time of approximately \SI{0.01}{s}, which is well below the sampling period of \SI{0.1}{s}.

Lastly, Figure~\ref{fig:compareDifferPredHorizon} presents a comparison of closed-loop performance of different prediction horizons. The combined PathFG+\allowbreak MPC controller is evaluated for both $N = 5$ and $N = 15$, while ungoverned MPC (i.e., without PathFG) is tested for $N = 50$. The results reveal a clear trade-off between performance and computation time. Subfigure~1 shows that longer prediction horizons yield faster convergence, with ungoverned MPC ($N = 50$) producing the most aggressive trajectory. However, this improved performance comes at a significant computational cost. As shown in subfigure~2, the ungoverned MPC with $N = 50$ incurs computation time that largely exceeds the \SI{0.1}{s} sampling period, precluding real-time implementation. In contrast, both PathFG+\allowbreak MPC configurations ($N = 5$ and $N = 15$) maintain computation time well below the sampling period. Notably, the PathFG component exhibits consistently low computational overhead (${\sim}\SI{0.001}{s}$), owing to its modular design that operates independently of MPC parameters. Finally, as demonstrated in Subfigures~3--5, all three cases achieve safe navigation in the obstacle-dense environments without violating any state or control input constraints.

\section{Conclusion}

This paper introduces the PathFG, a novel add-on unit that seamlessly integrates path planners with the nonlinear MPC. The PathFG is safe, asymptotically stable, converges in finite time and significantly expands the region of attraction of MPC. Rigorous theoretical analysis proves these properties, and quadrotor navigation simulations demonstrate its real-time performance. Future work will extend this method to accommodate disturbances and model uncertainties, with experimental validation on real-world robotic platforms such as quadrotors.

\appendix

\subsection{Proof of Theorem~\ref{prop:safety}} \label{sec:proofOfsafety}

We will first prove that $(x_k, s_k) \in \tilde\Gamma$ for all $k \in \mathbb N$ via induction.

The base case is trivial as $(x_0, s_0) \in \tilde\Gamma$ follows directly from $x_0 \in \mathcal D_x$. Next, assume that $(x_k, s_k) \in \tilde\Gamma$. By the definition of $\tilde\Gamma$, the optimal solution to \ref{eq:OCP} at timestep $k$ exists and can be expressed as
\begin{equation}
    \zeta_k^* = (x_k, \xi_{1|k}^*, \dots, \xi_{N|k}^*, \mu_{0|k}^*, \mu_{1|k}^*, \dots, \mu_{N-1|k}^*).
\end{equation}
Moreover, since $(x_k, s_k) \in \tilde\Gamma$ implies $(\xi_{N|k}^*, s_k) \in \tilde\termset$, it follows that $s_{k+1} = s_k$ is a valid solution to the PathFG policy $\smash[b]{g(\xi_{N|k}^*)}$ defined in \ref{eq:PathFGDef} and that $s_{k+1} = \tilde g(x_k, s_k)$ exists. Since $(\xi_{N \mid k}^*, s_{k+1}) \in \tilde\termset$ according to \ref{eq:PathFGDef}, we proceed to construct the following sequences for timestep $k + 1$:
\begin{equation}\begin{multlined}
    \zeta_{k+1} = (\xi_{1|k}^*, \dots, \xi_{N|k}^*, f(\xi_{N|k}^*, \tilde\kappa_\subtext{T}(\xi_{N|k}^*, s_{k+1})),\\
    \mu_{1|k}^*, \dots, \mu_{N-1|k}^*, \tilde\kappa_\subtext{T}(\xi_{N|k}^*, s_{k+1})),
\end{multlined}\end{equation}
where $\tilde\kappa_\subtext{T}$ is the terminal control law defined in \ref{eq:termCtrlLaw}. As the terminal set is invariant by Assumption~\ref{as:termInvariant}, the sequence $\zeta_{k+1}$ is a feasible solution to the OCP \ref{eq:OCP} at timestep $k + 1$. Since $x_{k+1} = \xi_{1|k}^*$, we have shown that $(x_{k+1}, s_{k+1}) \in \tilde\Gamma$.
    
Lastly, $x_k \in \mathcal X$ and $u_k \in \mathcal U$ follow directly from $(x_k, s_k) \in \tilde\Gamma$ for all $k \in \mathbb N$.
    
\subsection{Proof of Theorem~\ref{prop:AS}} \label{sec:proofOfAS}

The proof of Theorem~\ref{prop:AS} is broken down into several lemmas. The central idea of the proof is to demonstrate that, if the change in the auxiliary reference $s_k$ remains small over a sufficiently long period, the state of system \ref{eq:noPathFGDynamics} will become ``close enough'' to the reference state $\tilde x_{\mspace{-2mu}s}$ under the action of the MPC controller. This property allows $s_k$ to jump towards the target by some minimal distance along the planned path while maintaining the OCP feasibility. Hence, the change in $s_k$ will not become infinitesimally small, establishing the finite-time convergence of $s_k$ to $1$.

First, we demonstrate Input-to-State Stability (ISS) of \ref{eq:noPathFGDynamics} under the MPC control law with a varying auxiliary reference $s$, a result that will be leveraged in the subsequent proofs.

\begin{lemma}[ISS] \label{prop:ISS}
    Given Assumptions~\ref{as:system,as:nonlinSysStability,as:path}, an initial state $x_0 \in \mathcal D_x$, and the system $x_{k+1} = \tilde f(x_k, s_k)$, the error signal
    \begin{equation} \label{eq:error}
        e_k \defas x_k - \tilde x_{s_k}
    \end{equation}
    is ISS with respect to the input $\Delta s_k \defas s_{k+1} - s_k$, i.e., there exist $\beta \in \mathcal{KL}$ and $\gamma \in \mathcal K$ such that
    \begin{equation} \label{eq:ISS}
        \norm{x_k - \tilde x_{s_k}} \leq \beta(\norm{x_0 - \tilde x_{s_0}}, k) + \gamma\qty\Big(\+\sup_{j\in\mathbb N} \abs{\Delta s_j})
    \end{equation}
    for all $k \in \mathbb N$ and feasible initial conditions $(x_0, s_0) \in \tilde\Gamma$. Moreover, $\gamma$ is an asymptotic gain, i.e., 
    \begin{equation}
        \limsup_{k\to\infty} \norm{x_k - \tilde x_{s_k}} \leq \gamma\qty\Big(\limsup_{k\to\infty} \abs{\Delta s_k}).
    \end{equation}
\end{lemma}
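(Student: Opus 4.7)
\subsection*{Proof proposal for Lemma~\ref{prop:ISS}}

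The plan is to use the optimal value function $\tilde J(x,s)$ from \ref{eq:tildeJ} as an ISS-Lyapunov function for the closed-loop system $x_{k+1}=\tilde f(x_k,s_k)$ with the change in the auxiliary reference $\Delta s_k$ treated as the exogenous input. Once I establish the standard ISS-Lyapunov inequality
\begin{equation*}
    \tilde J(x_{k+1},s_{k+1}) - \tilde J(x_k,s_k) \leq -\alpha_1(\norm{e_k}) + \sigma(\abs{\Delta s_k}),
\end{equation*}
together with two-sided bounds $\alpha_3(\norm{e_k}) \leq \tilde J(x_k,s_k) \leq \alpha_4(\norm{e_k})$ in appropriate comparison classes, the ISS estimate \ref{eq:ISS} and the asymptotic gain follow from standard converse-Lyapunov results for discrete-time ISS systems (e.g., Jiang \& Wang).

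First, I would verify the two-sided bounds on $\tilde J$. The lower bound is immediate from Assumption~\ref{as:nonlinStageCost}: since $\tilde J(x,s) \geq \ell(\xi_0^*,\mu_0^*,\p(s)) \geq \gamma(\norm{x-\tilde x_{\mspace{-2mu}s}})$, we get a $\mathcal K_\infty$ lower bound in $\norm e$. For the upper bound, I would first note that, because the path $\p([0,1])$ is the continuous image of a compact set, it is compact; combined with Assumption~\ref{as:OCPSolutionContinuity} (Lipschitz continuity of $\zeta^*$), the uniform continuity of $\ell$ and $V$ in Assumptions~\ref{as:nonlinStageCost,as:nolinTermCost}, and the fact that $\tilde J(\tilde x_{\mspace{-2mu}s},s) = 0$ (attained by the trivial equilibrium trajectory whenever it is feasible, which holds on a neighbourhood of $\tilde x_{\mspace{-2mu}s}$ by Assumption~\ref{as:equilInTermSet}), standard arguments give a $\mathcal K_\infty$ upper bound of the form $\tilde J(x,s) \leq \alpha_4(\norm{x-\tilde x_{\mspace{-2mu}s}})$ on the relevant domain.

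Next, I would establish the decrease inequality by splitting
\begin{equation*}
    \tilde J(x_{k+1},s_{k+1}) - \tilde J(x_k,s_k) = \underbrace{\bigl[\tilde J(x_{k+1},s_{k+1}) - \tilde J(x_{k+1},s_k)\bigr]}_{\text{reference-shift term}} + \underbrace{\bigl[\tilde J(x_{k+1},s_k) - \tilde J(x_k,s_k)\bigr]}_{\text{standard MPC decrease}}.
\end{equation*}
For the second bracket I would use the classical shifted-tail candidate (append $\tilde\kappa_\subtext{T}(\xi_{N|k}^*,s_k)$ to the time-shifted optimal sequence) together with the terminal cost inequality \ref{eq:termCostIneq} from Assumption~\ref{as:nolinTermCost} to obtain a bound of $-\ell(x_k,\mu_{0|k}^*,\p(s_k)) \leq -\gamma(\norm{e_k})$. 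For the first bracket, Lipschitz continuity of $\zeta^*$ in $s$ (via $\p$) combined with uniform continuity of $\ell$ and $V$ gives a bound $\sigma(\abs{\Delta s_k})$ with $\sigma \in \mathcal K$; here I would sum $N{+}1$ continuity moduli for the stage costs and one for the terminal cost, all evaluated at perturbations controlled by $\abs{\Delta s_k}$. Combining the two brackets yields the ISS-Lyapunov inequality.

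Finally, I would invoke the discrete-time ISS-Lyapunov theorem to produce $\beta\in\mathcal{KL}$ and $\gamma\in\mathcal K$ satisfying \ref{eq:ISS}. The asymptotic gain property follows from the same Lyapunov inequality by a standard argument: for any $\varepsilon>0$, once $\alpha_1(\norm{e_k}) > \sigma(\limsup_j \abs{\Delta s_j}) + \varepsilon$, $\tilde J$ decreases strictly, so $\norm{e_k}$ cannot exceed $\alpha_1^{\inv}\!\bigl(\sigma(\limsup_j \abs{\Delta s_j})\bigr)$ in the limit, giving the stated $\gamma$. The main technical obstacle I anticipate is establishing the upper bound $\alpha_4$ and the continuity modulus $\sigma$ as genuine comparison-class functions on the (possibly unbounded) feasible set $\tilde\Gamma$ rather than merely locally; compactness of $\p([0,1])$ together with the Lipschitz regularity of $\zeta^*$ assumed in \ref{as:OCPSolutionContinuity} is what makes this work, but the bookkeeping around domains of uniform continuity is where care is required.
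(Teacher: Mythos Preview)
Your proposal is correct and follows essentially the same route as the paper: use $\tilde J$ as an ISS-Lyapunov function, establish the two-sided $\mathcal K_\infty$ bounds, split $\tilde J(x_{k+1},s_{k+1})-\tilde J(x_k,s_k)$ into the standard MPC decrease term and a reference-shift term bounded via continuity in $s$, then invoke the discrete-time ISS-Lyapunov theorem of Jiang and Wang. The only cosmetic difference is that the paper obtains the $\sigma$ bound by citing uniform continuity of $\tilde J$ directly (via \cite{ref:Limon2009}), whereas you reconstruct it from the Lipschitz continuity of $\zeta^*$ and the uniform continuity of $\ell$ and $V$; both arguments lead to the same inequality.
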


\begin{proof}
    For any $s \in [0, 1]$, the optimal cost function of the MPC feedback law $\tilde J(\phdot, s)$, as defined in \ref{eq:tildeJ}, is a Lyapunov function for the closed-loop system \cite{ref:Mayne2000}, i.e., there exist $\alpha, \alpha_\ell, \alpha_u \in \mathcal K_\infty$ such that for all $(x, s) \in \tilde\Gamma$,
    \begin{equation}\begin{gathered}
        \tilde J(\tilde f(x, s), s) - \tilde  J(x, s) \leq - \alpha(\norm{x - \tilde x_{\mspace{-2mu}s}})\\
        \alpha_\ell(\norm{x - \tilde x_{\mspace{-2mu}s}}) \leq \tilde J(x, s) \leq \alpha_u(\norm{x - \tilde x_{\mspace{-2mu}s}})
    \end{gathered}\end{equation}
    Furthermore, under Assumptions~\ref{as:system,as:nonlinSysStability}, $\tilde J$ is uniformly continuous \cite[Prop.~1]{ref:Limon2009} and thus, there exists $\sigma\_[2]x, \sigma\_[2]s \in \mathcal K$ such that
    \begin{equation}
        \abs*{\tilde J(x^\plus, s^\plus) - \tilde J(x,s)} \leq \sigma\_[2]x(\norm*{x^\plus - x}) + \sigma\_[2]s(\abs*{s^\plus - s}).
    \end{equation}
    Hence, for any $(x, s) \in \tilde\Gamma$, $x^\plus = \tilde f(x,s)$ and $(x^\plus, s^\plus) \in \tilde\Gamma$, we have
    \begin{equation}\begin{aligned}
        \Delta \tilde J &= \tilde J(x^\plus, s^\plus) - \tilde J(x, s)\\
            &= \tilde J(x^\plus, s) - \tilde J(x, s) + \tilde J(x^\plus, s^\plus) - \tilde J(x^\plus, s)\\ 
            &\leq -\alpha(\norm{x - \tilde x_{\mspace{-2mu}s}}) + \abs*{\tilde J(x^\plus, s^\plus) - \tilde J(x^\plus, s)}\\
            &\leq -\alpha(\norm{x - \tilde x_{\mspace{-2mu}s}}) + \sigma\_[2]s(\abs*{s^\plus - s}) 
    \end{aligned}\end{equation}
    which demonstrates ISS \cite[Lemma~3.5]{ref:Jiang2001}.
\end{proof}

With the system's ISS property established, we proceed to show that the asymptotic distance between the optimal final predicated state $\xi^*_{N|k}$ and the reference state $\tilde x_{s_k}$ decreases as the change in auxiliary reference $\Delta s_k$ decreases.

\begin{lemma} \label{prop:xiConvrgToXBar}
    Given Assumptions~\ref{as:system,as:nonlinSysStability,as:path}, and an initial state $x_0 \in \mathcal D_x$, there exists $\bar\gamma \in \mathcal K$ such that
    \begin{equation}
        \limsup_{k\to\infty} \norm*{\xi^*_{N|k} - \tilde x_{s_k}} \leq \bar \gamma\qty\Big(\limsup_{k\to\infty} \abs{\Delta s_k}).
    \end{equation}
\end{lemma}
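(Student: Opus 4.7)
The plan is to bound $\|\xi^*_{N|k} - \tilde x_{s_k}\|$ pointwise by $\|x_k - \tilde x_{s_k}\|$ using continuity of the OCP solution, and then invoke Lemma~\ref{prop:ISS} to convert this into an asymptotic bound in terms of $|\Delta s_k|$.

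First, I would establish the equilibrium property $\tilde \xi_N^*(\tilde x_{\mspace{-2mu}s}, s) = \tilde x_{\mspace{-2mu}s}$ for every $s \in [0,1]$. This uses several pieces of Assumption~\ref{as:nonlinSysStability} together: the stationary trajectory $(\xi_i, \mu_i) = (\tilde x_{\mspace{-2mu}s}, \tilde u_s)$ is feasible in the OCP because $(\tilde x_{\mspace{-2mu}s}, s) \in \Int \tilde\termset$ by Assumption~\ref{as:equilInTermSet} and $\tilde x_{\mspace{-2mu}s} \in \mathcal X$, $\tilde u_s \in \mathcal U$ by Assumption~\ref{as:system}. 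Its associated cost vanishes since $\ell(\tilde x_{\mspace{-2mu}s}, \tilde u_s, \p(s)) = 0$ and $V(\tilde x_{\mspace{-2mu}s}, \p(s)) = 0$ by Assumptions~\ref{as:nonlinStageCost} and \ref{as:nolinTermCost}. Because $\ell, V \geq 0$, this trajectory is optimal; moreover, the lower bound $\ell(\xi, \mu, \p(s)) \geq \gamma(\|\xi - \tilde x_{\mspace{-2mu}s}\|)$ forces every optimal predicted state to coincide with $\tilde x_{\mspace{-2mu}s}$, and in particular $\xi_N^* = \tilde x_{\mspace{-2mu}s}$.

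Second, I would use Assumption~\ref{as:OCPSolutionContinuity}: $\zeta^*$ is Lipschitz on $\Gamma$, so there exists $L > 0$ such that the component $\tilde\xi_N^*$ is Lipschitz in its first argument uniformly in $s$. Combined with the equilibrium property, this yields, for any $(x_k, s_k) \in \tilde\Gamma$,
\begin{equation}
    \norm{\xi_{N|k}^* - \tilde x_{s_k}} = \norm*{\tilde\xi_N^*(x_k, s_k) - \tilde\xi_N^*(\tilde x_{s_k}, s_k)} \leq L\,\norm{x_k - \tilde x_{s_k}}.
\end{equation}
Taking $\limsup_{k\to\infty}$ on both sides and applying the asymptotic gain estimate from Lemma~\ref{prop:ISS} gives
\begin{equation}
    \limsup_{k\to\infty} \norm*{\xi_{N|k}^* - \tilde x_{s_k}} \leq L\,\gamma\qty\Big(\limsup_{k\to\infty} \abs{\Delta s_k}),
\end{equation}
so the claim follows with $\bar\gamma \defas L\,\gamma \in \mathcal K$.

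The main obstacle is the first step, namely verifying that starting at an equilibrium $\tilde x_{\mspace{-2mu}s}$ produces the stationary optimal trajectory; this is where Assumptions~\ref{as:nonlinStageCost,as:nolinTermCost,as:equilInTermSet} must be combined carefully (in particular, \ref{as:equilInTermSet} is needed to guarantee feasibility of the stationary candidate and the positivity of $\ell$ and $V$ together with the strict lower bound on $\ell$ is needed to conclude that $\xi_N^* = \tilde x_{\mspace{-2mu}s}$ rather than some other zero-cost state). Once this identity is in place, the rest of the argument is a direct consequence of Lipschitz continuity and the previously established ISS result.
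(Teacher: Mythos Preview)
Your proof is correct and follows essentially the same approach as the paper: both use the identity $\tilde\xi_N^*(\tilde x_{\mspace{-2mu}s}, s) = \tilde x_{\mspace{-2mu}s}$ together with continuity of $\tilde\xi_N^*$ in its first argument (you invoke Lipschitz continuity directly from Assumption~\ref{as:OCPSolutionContinuity}, the paper phrases it as uniform continuity with a $\mathcal K$-function $\alpha$), then apply the asymptotic-gain part of Lemma~\ref{prop:ISS} and set $\bar\gamma = \alpha\circ\gamma$ (or $L\gamma$ in your linear version). One small remark: your argument that the lower bound on $\ell$ forces $\xi_N^* = \tilde x_{\mspace{-2mu}s}$ only directly pins down $\xi_0^*,\dots,\xi_{N-1}^*$; the cleanest way to close this is to note that Assumption~\ref{as:OCPSolutionContinuity} already makes $\zeta^*$ single-valued, so the zero-cost stationary candidate must be \emph{the} optimizer.
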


\begin{proof}
    Because $\tilde\xi_N^*(\phdot, s)$ is uniformly continuous for any $s \in [0, 1]$ by Assumption~\ref{as:OCPSolutionContinuity}, there exists an $\alpha \in \mathcal K$ such that
    \begin{equation}
        \norm*{\tilde\xi_N^*(x, s) - \tilde\xi_N^*(\tilde x_{\mspace{-2mu}s}, s)} \leq \alpha(\norm{x - \tilde x_{\mspace{-2mu}s}})
    \end{equation}
    for all $(x, s) \in \tilde\Gamma$. Since $\tilde\xi_N^*(\tilde x_{\mspace{-2mu}s}, s) = \tilde x_{\mspace{-2mu}s}$,
    \begin{equation}\begin{multlined}
        \limsup_{k\to\infty} \norm*{\tilde\xi_N^*(x_k, s_k) - \tilde x_{s_k}} \leq \limsup_{k\to\infty} \alpha(\norm{x_k - \tilde x_{s_k}})\\
        = \alpha\qty\Big(\limsup_{k\to\infty} \norm{x_k - \tilde x_{s_k}}) \leq [\alpha \circ \gamma]\qty\Big(\limsup_{k\to\infty} \abs{\Delta s_k})
    \end{multlined}\end{equation}
    by Lemma~\ref{prop:ISS}. The result follows by letting $\bar\gamma = \alpha \circ \gamma$.
\end{proof}

The following Lemmas~\ref{prop:largeRefChange,prop:largeRefChangeEta,prop:finiteTimeConvrgOfS} build up to the result that the sequence of auxiliary references $\qty{s_k}_{k=0}^\infty$ produced by PathFG converges to $s = 1$ in finite time.

We first show that if the optimal final predicated state $\xi_N^*$ is sufficiently close to the equilibrium state $\tilde x_{\mspace{-2mu}s}$, the auxiliary reference $s$ is able to move some minimal distance $\alpha$ towards the final reference $1$.

We first need to define the set of equilibrium states on the path $\mathcal P(x_0)$ as
\begin{equation}
    \Sigma = \set{(\tilde x_{\mspace{-2mu}s}, \p(s)) | s \in [0, 1]},
\end{equation}
and the $\delta$-neighborhood around $\Sigma$ with respect to $x$ as
\begin{equation}
    \mathcal B_\delta(\Sigma) \defas \set{(x, \p(s)) | \norm{x - \tilde x_{\mspace{-2mu}s}} \leq \delta,\, s \in [0, 1]}.
\end{equation}

\begin{lemma} \label{prop:largeRefChange}
    Given Assumptions~\ref{as:system,as:nonlinSysStability,as:path}, and an initial state $x_0 \in \mathcal D_x$, there exists a $\delta > 0$ and $\alpha > 0$ such that for all $(\xi_{N|k}^*,\allowbreak \p(s_k)) \in \mathcal B_\delta(\Sigma)$,
    \begin{equation}\begin{aligned}
        & s_{k+1} - s_k \geq \alpha && \text{if } 1 - s_k > \alpha\\
        & s_{k+1} = 1 && \text{if } 1 - s_k \leq \alpha.
    \end{aligned}\end{equation}
\end{lemma}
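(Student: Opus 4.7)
The plan is to leverage Assumption~\ref{as:equilInTermSet}, which places every equilibrium pair $(\tilde x_{\mspace{-2mu}s}, \p(s))$ strictly inside the terminal set $\termset$, together with the compactness of the equilibrium curve $\Sigma$, to extract a uniform ``thickness'' of $\termset$ around $\Sigma$. This uniform margin will then translate, via continuity of $\p$, into the constants $\delta$ and $\alpha$ such that whenever $\xi_{N|k}^*$ lies within $\delta$ of $\tilde x_{s_k}$, the PathFG can push the auxiliary reference by at least $\alpha$ along the path without leaving $\tilde\termset$.

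First, I would verify that $\Sigma$ is compact: the map $s \mapsto (\tilde x_{\mspace{-2mu}s}, \p(s))$ is continuous on $[0,1]$ by continuity of $\p$ and $\bar x_{(\cdot)}$, and Assumption~\ref{as:equilInTermSet} yields $\Sigma \subset \Int\termset$. To each $s \in [0,1]$ one can attach an open ball $B_{r(s)}((\tilde x_{\mspace{-2mu}s}, \p(s))) \subset \termset$ with $r(s) > 0$; covering $\Sigma$ by the half-radius balls and extracting a finite subcover by compactness produces a uniform $\rho > 0$ such that every point within distance $\rho$ of $\Sigma$ in the product space $\mathcal X \times \mathcal R$ belongs to $\termset$. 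I would then invoke uniform continuity of $\p$ on the compact interval $[0,1]$ to pick $\alpha > 0$ so that $|s' - s| \leq \alpha$ implies $\norm{\p(s') - \p(s)} \leq \rho/2$, and set $\delta \defas \rho/2$. By the triangle inequality, for any $(x, \p(s)) \in \mathcal B_\delta(\Sigma)$ and any $s' \in [0,1]$ with $|s' - s| \leq \alpha$, the pair $(x, \p(s'))$ lies within $\rho$ of $(\tilde x_{\mspace{-2mu}s}, \p(s))$, hence in $\termset$, i.e., $(x, s') \in \tilde\termset$.

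Applying this with $x = \xi_{N|k}^*$ and $s = s_k$ under the hypothesis $(\xi_{N|k}^*, \p(s_k)) \in \mathcal B_\delta(\Sigma)$, the PathFG update~\ref{eq:PathFGDef} is forced to return $s_{k+1} \geq s_k + \alpha$ whenever $s_k + \alpha \leq 1$ and $s_{k+1} = 1$ otherwise, matching the two cases of the lemma. The main obstacle I anticipate is cleanly producing the uniform $\rho$; the finite-subcover argument above sidesteps any explicit assumption about closedness of $\termset$, although if $\termset$ is closed (as in the Lyapunov-sublevel construction of Section~\ref{sec:ConstructTermSet}) one may alternatively argue that the distance function $s \mapsto \operatorname{dist}((\tilde x_{\mspace{-2mu}s}, \p(s)), \termset^\complement)$ is continuous and positive on the compact set $[0,1]$, hence bounded below by some $2\rho > 0$.
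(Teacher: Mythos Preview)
Your proposal is correct and follows essentially the same route as the paper: establish compactness of $\Sigma$, extract a uniform radius around $\Sigma$ contained in $\termset$ (the paper uses a sequential-compactness contradiction where you use a finite subcover or the positive-minimum-of-a-continuous-distance argument, but these are interchangeable), split that radius in half to obtain $\delta$ and a margin in the $\p$-direction, invoke uniform continuity of $\p$ on $[0,1]$ to convert the latter into $\alpha$, and conclude via the definition of $g$ in \ref{eq:PathFGDef}.
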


\begin{proof}
    First, we will prove that $\Sigma$ is compact. Consider the function $h\colon \mathcal P(x_0) \to \Sigma$ such that $h(\p(s)) = (\tilde x_{\mspace{-2mu}s}, \p(s))$. Since $\bar x_{(\cdot)}$ is continuous by Assumption~\ref{as:system}, so is $h$. Because $\mathcal P(x_0)$ is compact, $\Sigma = h(\mathcal P(x_0))$ is also compact.

    We claim that since $\Sigma$ is compact, $\Int\termset$ is open, and $\Sigma \subseteq \Int\termset$ (by Assumption~\ref{as:equilInTermSet}), there exists a minimal distance $\vartheta > 0$ between $\Sigma$ and the boundary of $\termset$, i.e., 
    \begin{equation} \label{eq:ballSubsetIntT}
        \mathcal B_\vartheta\qty\big((\tilde x_{\mspace{-2mu}s}, \p(s))) \subseteq \Int\termset, \quad \forall (\tilde x_{\mspace{-2mu}s}, \p(s)) \in \Sigma.
    \end{equation}
               
    Assume that the claim is false. Then there are sequences $\qty{a_n}_{n=0}^\infty \subseteq \Sigma$ and $\qty{\delta_n}_{n=0}^\infty \subseteq \mathbb R_{>0}$ such that $\delta_n \to 0$ as $n \to \infty$ and $\mathcal B_{\delta_n}\!(a_n) \not\subseteq \Int\termset$ for all $n \in \mathbb N$. Due to the compactness of $\Sigma$, there exists a subsequence $\qty{a_{n_k}}_{k=0}^\infty$ such that $a_{n_k} \?\to a$ as $k \to \infty$ for some $a \in \Sigma$. Since $\Int\termset$ is open, there is some $\delta^* > 0$ such that $\mathcal B_{\delta^*\!}(a) \subseteq \Int\termset$. Choose $k$ sufficiently large so that $\delta_{n_k} \!< \delta^*\!/2$ and $\norm{a_{n_k} \?- a} < \delta^*\!/2$, which implies $\mathcal B_{\delta_{n_k}}\!(a_{n_k}) \subseteq \mathcal B_{\delta^*\!}(a) \subseteq \Int\termset$, contradicting our assumption that $\mathcal B_{\delta_n}\!(a_n) \not\subseteq \Int\termset$ for all $n \in \mathbb N$. Hence, we have proved the existence of $\vartheta$
                      
    Next, we claim that by choosing $\delta = \vartheta/2$ and $\rho = \vartheta/2$, there is a minimal distance of $\rho$ between $\mathcal B_\delta(\Sigma)$ and the boundary of $\termset$ with respect to $\p(s)$, i.e., if $(\xi_{N|k}^*, \p(s_k)) \in \mathcal B_\delta(\Sigma)$, then
    \begin{equation} \label{eq:minDistOfRhoFixXi}
        (\xi_{N|k}^*, \p(s)) \in \termset, \quad \forall \p(s) \in \mathcal B_\rho(\p(s_k)).
    \end{equation}
    This follows from the fact that if $\p(s) \in \mathcal B_\rho(\p(s_k))$, then
    \begin{equation}\begin{aligned}
            & \norm*{(\xi_{N|k}^*, \p(s)) - (\tilde x_{s_k}^*, \p(s_k))}\\
        \leq{} & \begin{multlined}[t]
            \norm*{(\xi_{N|k}^*, \p(s)) - (\xi_{N|k}^*, \p(s_k))}\\
            + \norm*{(\xi_{N|k}^*, \p(s_k)) - (\tilde x_{s_k}, \p(s_k))}
        \end{multlined}\\
        ={} & \norm{\p(s) - \p(s_k)} + \norm*{\xi_{N|k}^* - \tilde x_{s_k}}\\
        \leq{} & \rho + \delta = \tfrac{\vartheta}{2} + \tfrac{\vartheta}{2} =\vartheta,
    \end{aligned}\end{equation}
    which means that
    \begin{equation}
         (\xi_{N|k}^*, \p(s)) \in \mathcal B_\vartheta\qty\big((\tilde x_{s_k}, \p(s_k))) \subseteq \termset
    \end{equation}
    due to \ref{eq:ballSubsetIntT} as claimed. 

    Next, we prove that there exists an $\alpha > 0$ such that if $(\xi_{N|k}^*, \p(s_k)) \in \mathcal B_\delta(\Sigma)$, then 
    \begin{equation} \label{eq:intervalOfRadiusAlpha}
        (\xi_{N|k}^*, \p(s)) \in \termset, \quad \forall s \in \mathcal B_\alpha(s_k) \cap [0, 1].
    \end{equation}
    Because $\p$ is continuous on a closed interval $[0, 1]$, it is uniformly continuous. Therefore, there exists a uniform $\alpha = \alpha(\delta) > 0$ (that works for all $s_k$) such that if $\abs{s - s_k} \leq \alpha$, then $\norm{\p(s) - \p(s_k)} \leq \rho$, and so $(\xi_{N|k}^*, \p(s)) \in \termset$ according to \ref{eq:minDistOfRhoFixXi}. 
    
    Finally, we prove the main statement of the lemma. If $1 - s_k > \alpha$, then $s_{k+1} \geq s_k + \alpha$ because \ref{eq:intervalOfRadiusAlpha} implies $(\xi_{N|k}^*, \p(s_k + \alpha)) \in \termset$ and  
    \begin{equation}\begin{multlined}
        s_{k+1} = g(\xi_{N|k}^*) = \max\set{s |\\ (\xi_{N|k}^*, \p(s)) \in \termset} \geq s_k + \alpha.
    \end{multlined}\end{equation}
    On the other hand, if $1 - s_k \leq \alpha$, then $s_{k+1} = 1$ since $(\xi_{N|k}^*,\allowbreak \p(1)) \in \termset$ by \ref{eq:intervalOfRadiusAlpha}.
\end{proof}

The following lemma proves that the conditions for Lemma~\ref{prop:largeRefChange} will eventually be satisfied given a finite, sufficiently long wait. This ensures that the auxiliary reference $s$ will always be able to jump at least some minimal distance towards the final reference $1$.

\begin{lemma} \label{prop:largeRefChangeEta}
    Let Assumptions~\ref{as:system,as:nonlinSysStability,as:path} hold, and let $x_0 \in \mathcal D_x$ and $\eta \in (0, \min\qty{\alpha, \bar\gamma^\inv(\delta)})$. For any timestep $k_0$, there exists a finite time $k_i > k_0$ such that
    \begin{equation} \label{eq:largeRefChangeEta}
        s_{k_i+1} - s_{k_i} \geq \eta \conjtext{or} s_{k_i+1} = 1.
    \end{equation}
\end{lemma}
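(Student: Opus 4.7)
The plan is to argue by contradiction. Suppose no such $k_i$ exists, so for every $k > k_0$ we have both $s_{k+1} - s_k < \eta$ and $s_{k+1} < 1$. The monotonicity $s_{k+1} \geq s_k$ (which follows from the recursive feasibility argument in the proof of \cref{prop:safety}, where $s_{k+1} = s_k$ is always a feasible candidate for \ref{eq:PathFGDef}) then implies $\abs{\Delta s_k} < \eta$ for all $k > k_0$, and in particular
\begin{equation*}
    \limsup_{k \to \infty} \abs{\Delta s_k} \leq \eta.
\end{equation*}

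Applying \cref{prop:xiConvrgToXBar} yields $\limsup_{k \to \infty} \norm{\xi_{N|k}^* - \tilde x_{s_k}} \leq \bar\gamma(\eta)$. Since $\eta < \bar\gamma^\inv(\delta)$ and $\bar\gamma \in \mathcal K$ is strictly increasing, $\bar\gamma(\eta) < \delta$. Hence there is a finite $K \geq k_0$ such that $\norm{\xi_{N|k}^* - \tilde x_{s_k}} < \delta$ for all $k \geq K$, i.e., $(\xi_{N|k}^*, \p(s_k)) \in \mathcal B_\delta(\Sigma)$ for all such $k$.

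Now fix any $k_i \geq K$ and invoke \cref{prop:largeRefChange}. If $1 - s_{k_i} > \alpha$, then $s_{k_i+1} - s_{k_i} \geq \alpha > \eta$, which contradicts the standing assumption. If instead $1 - s_{k_i} \leq \alpha$, then $s_{k_i+1} = 1$, which also contradicts the assumption. Either way we reach a contradiction, so some $k_i > k_0$ satisfying \ref{eq:largeRefChangeEta} must exist.

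The only subtle point is making sure the asymptotic bound from \cref{prop:xiConvrgToXBar} transfers into a pointwise statement $\norm{\xi_{N|k}^* - \tilde x_{s_k}} < \delta$ valid for all sufficiently large $k$. This is straightforward once the strict inequality $\bar\gamma(\eta) < \delta$ is noted (which is why the hypothesis requires $\eta$ to be strictly less than $\bar\gamma^\inv(\delta)$, not merely $\leq$); otherwise one would only get a non-strict bound that could fail to place $(\xi_{N|k}^*, \p(s_k))$ in the open neighborhood needed to apply \cref{prop:largeRefChange}. The rest of the argument is a direct case split on whether the remaining distance $1 - s_{k_i}$ to the target exceeds the guaranteed jump size $\alpha$.
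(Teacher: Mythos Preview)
Your proof is correct and follows essentially the same contradiction argument as the paper: assume the increments stay below $\eta$, use \cref{prop:xiConvrgToXBar} together with $\eta < \bar\gamma^\inv(\delta)$ to force $(\xi_{N|k}^*,\p(s_k))$ into $\mathcal B_\delta(\Sigma)$ for large $k$, and then invoke \cref{prop:largeRefChange} to obtain a jump of size at least $\alpha > \eta$ (or $s_{k+1}=1$), contradicting the assumption. Your write-up is in fact slightly more careful than the paper's in making the monotonicity of $s_k$ and the passage from the strict $\limsup$ bound to a pointwise bound explicit.
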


\begin{proof}
    Assume there is no such $k_i$. Then,
    \begin{equation}
       \abs{\Delta s_k} \defas s_{k+1} - s_k < \eta < \bar\gamma^\inv(\delta), \quad \forall k > k_0,
    \end{equation}
    and so by applying $\limsup$ we have
    \begin{equation}
        \limsup_{k\to\infty} \abs{\Delta s_k} < \limsup_{k\to\infty} \bar\gamma^\inv(\delta) = \bar\gamma^\inv(\delta),
    \end{equation}
    which we can then apply $\bar\gamma$ to get
    \begin{equation}
        \bar\gamma\qty\Big(\limsup_{k\to\infty} \abs{\Delta s_k}) < [\bar\gamma \circ \bar\gamma^\inv](\delta) = \delta.
    \end{equation}
    Then by Lemma~\ref{prop:xiConvrgToXBar}, we have
    \begin{equation}    
        \limsup_{k\to\infty} \norm*{\xi_{N|k}^* - \tilde x_{s_k}} < \bar\gamma\qty\Big(\limsup_{k\to\infty} \abs{\Delta s_k}) < \delta.
    \end{equation}
    Therefore, there exists a finite $k_i$ such that $\xi_{N|k_i}^*$ is close enough to $\tilde x_{s_{k_i}}$, i.e., $(\xi_{N|k_i}^*, \p(s_{k_i})) \in \mathcal B_\delta(\Sigma)$. By virtue of Lemma~\ref{prop:largeRefChange}, either
    \begin{equation}
        s_{k_i+1} - s_{k_i} \geq \alpha > \eta \conjtext{or} s_{k_i+1} = 1,
    \end{equation}
    concluding the proof.
\end{proof}

Next, we prove the existence of sufficiently many such ``large'' jumps in the auxiliary reference $s$, which guarantees the finite-time convergence of $s$ to $1$.

\begin{lemma}[Finite Time Convergence of $s_k$] \label{prop:finiteTimeConvrgOfS}
    Given Assumptions~\ref{as:system,as:nonlinSysStability,as:path}, and an initial condition $x_0 \in \mathcal D_x$, there is a $k_M \geq 0$ such that $s_k = 1$ for all $k \geq k_M + 1$.
\end{lemma}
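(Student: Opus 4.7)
The plan is to combine monotonicity of the sequence $\{s_k\}$ with the ``large jump'' guarantee from Lemma~\ref{prop:largeRefChangeEta} to get an explicit bound on how many jumps can occur before $s_k$ hits $1$.

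\textbf{Step 1: Monotonicity of $s_k$.} First I would observe that the sequence $\{s_k\}$ is non-decreasing. This is essentially already shown inside the proof of Theorem~\ref{prop:safety}: since $(x_k, s_k) \in \tilde\Gamma$ implies $(\xi_{N|k}^*, s_k) \in \tilde\termset$, the value $s_k$ itself lies in the feasible set of the maximization defining $g(\xi_{N|k}^*)$, so $s_{k+1} = g(\xi_{N|k}^*) \geq s_k$. Combined with $s_{k+1} \in [0,1]$, this also shows that once $s_{k^*} = 1$, all subsequent $s_k$ remain at $1$.

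\textbf{Step 2: Iterate the large-jump lemma.} Fix $\eta \in (0, \min\{\alpha, \bar\gamma^{-1}(\delta)\})$ as in Lemma~\ref{prop:largeRefChangeEta}. Starting with $\tilde k_0 = 0$, apply the lemma to obtain a finite $\tilde k_1 > \tilde k_0$ with either $s_{\tilde k_1 + 1} = 1$ (in which case we are done by Step~1 with $k_M = \tilde k_1$) or $s_{\tilde k_1 + 1} - s_{\tilde k_1} \geq \eta$. In the latter case, apply the lemma again starting at $\tilde k_1 + 1$ to extract $\tilde k_2 > \tilde k_1$, and so on. This produces a (possibly finite) strictly increasing sequence of indices $\tilde k_0 < \tilde k_1 < \tilde k_2 < \cdots$ at which $s$ either reaches $1$ or grows by at least $\eta$.

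\textbf{Step 3: A counting argument bounds the number of jumps.} By monotonicity, between two consecutive indices $\tilde k_j$ and $\tilde k_{j+1}$ the sequence $s$ can only increase, so after $M$ ``large jumps'' we have
\begin{equation}
    s_{\tilde k_M + 1} \geq s_0 + M\eta \geq M\eta.
\end{equation}
Since $s_k \leq 1$ for all $k$, this forces $M \leq \lceil 1/\eta \rceil$, meaning the iteration in Step~2 can produce at most $\lceil 1/\eta \rceil$ strictly-less-than-$1$ jumps before the ``$s_{\tilde k_i + 1} = 1$'' alternative must trigger. Setting $k_M$ to be the first such index establishes the claim, since by Step~1 we then have $s_k = 1$ for all $k \geq k_M + 1$.

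The statement follows without further obstacles; the nontrivial work is already contained in Lemmas~\ref{prop:xiConvrgToXBar}--\ref{prop:largeRefChangeEta}. The only subtle point is ensuring monotonicity so that the jumps actually accumulate rather than being canceled by decreases, but that is an immediate consequence of the fact that $s_k$ itself is always a feasible candidate in the maximization defining $g(\xi_{N|k}^*)$.
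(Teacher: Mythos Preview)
Your proposal is correct and follows essentially the same approach as the paper: iterate Lemma~\ref{prop:largeRefChangeEta} to extract an increasing sequence of ``large jump'' times, then use monotonicity of $\{s_k\}$ to telescope and bound the number of such jumps by $\lceil 1/\eta \rceil$. The paper presents the telescoping chain explicitly and deduces $s_{k_M+1} \geq 1$ in the worst case, whereas you phrase it as a counting bound on how many sub-$1$ jumps can occur before the $s=1$ alternative must fire; these are equivalent, and your explicit isolation of the monotonicity step up front is a nice touch that the paper only invokes mid-chain.
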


\begin{proof}
    Using $k_0 = 0$ in Lemma~\ref{prop:largeRefChangeEta}, there exists a finite time $k_1$ such that \ref{eq:largeRefChangeEta} is satisfied when $i = 1$. Using $k_0 = k_1$, there exists a finite time $k_2 > k_1$ such that \ref{eq:largeRefChangeEta} is satisfied when $i = 2$. Repeat this until there is an increasing sequence of $M \defas \lceil1/\eta\rceil$ timesteps $k_1 < k_2 < \dots < k_M$.

    From \ref{eq:largeRefChangeEta}, there are two possibilities for each $i$: $s_{k_i+1} - s_{k_i} \geq \eta$ or $s_{k_i+1} = 1$. If there exists an $i$ such that $s_{k_i+1} = 1$, then the desired result is immediately obtained. Therefore, the worst-case scenario occurs when
    \begin{equation}
        s_{k_i+1} - s_{k_i} \geq \eta, \quad \forall i \in \mathbb N_{[1,M]}.
    \end{equation}
    Then \noDisplaySkip
    \begin{equation}\begin{aligned}
        s_{k_M+1} &= s_{k_M} + (s_{k_M+1} - s_{k_M})\\
            &\geq s_{k_M} + \eta \geq s_{k_{M-1}+1} + \eta\\
            &\geq s_{k_{M-1}} + 2\eta \geq s_{k_{M-2}+1} + 2\eta\\
            &\quad\vdots\\
            &\geq s_{k_1} + M\eta = s_{k_1} + \bigl\lceil\tfrac{1}{\eta}\bigr\rceil \eta\\
            &\geq s_{k_1} + 1 \geq 1.
    \end{aligned}\end{equation}
    The second inequality holds because $s$ cannot backtrack, so
    \begin{equation}\begin{aligned}
        k_M > k_{M-1} &\implies k_M \geq k_{M-1} + 1\\
            &\implies s_{k_M} \geq s_{k_{M-1}+1}.
    \end{aligned}\end{equation}
    Since $s_{k_M+1}$ must be in the range $[0, 1]$, it follows that $s_{k_M+1} = 1$.
\end{proof}

Combining the input-to-state stability of \ref{eq:noPathFGDynamics} (Lemma~\ref{prop:ISS}) and the finite-time convergence of $s$ to $1$ (Lemma~\ref{prop:finiteTimeConvrgOfS}), we can show that $(0, 1)$ is asymptotically stable with respect to the state $(e, s)$ under the closed-loop system \ref{eq:compactDynamics}, where $e$ is defined in \ref{eq:error}.

First, we prove that $\lim_{k\to\infty} (e_k, s_k) = (0, 1)$ given any feasible initial condition $(x_0, s_0) \in (\mathcal D_x \times [0, 1]) \cap \tilde\Gamma$. From Theorem~\ref{prop:safety}, we know that $(x_k, s_k) \in \tilde\Gamma$ for all $k \in \mathbb N$. According to Lemma~\ref{prop:finiteTimeConvrgOfS}, there exists a $k_M \in \mathbb N$ such that $s_k = 1$ for all $k \geq k_M$, which makes $\tilde x_{s_k} \!= \bar x_r$. By leveraging the ISS property of MPC (Lemma~\ref{prop:ISS}), $x_k$ converges to $\bar x_r$ as $k \to \infty$, thereby making $e_k = x_k - \bar x_{s_k} \!\to 0$.

Next, we prove that $(0, 1)$ is Lyapunov stable, i.e., there exists a $\mathcal K$-function $\varsigma$ such that
\begin{equation} \label{eq:stabilityCondition}
    \norm{\mat{e_k\\s_k-1}} \leq \varsigma\qty(\norm{\mat{e_0\\s_0-1}})
\end{equation}
for all $k \in \mathbb N$ and initial conditions $(e_0, s_0)$.

Due to Lemma~\ref{prop:ISS},
\begin{equation}\begin{aligned}
    \norm{e_k} &\leq \beta(\norm{e_0}, k) + \gamma\qty\Big(\+\sup_{j\in\mathbb N} \abs{\Delta s_j})\\
        &\leq \beta(\norm{e_0}, 0) + \gamma(\abs{s_0 - 1}),
\end{aligned}\end{equation}
therefore \noDisplaySkip
\begin{equation} \label{eq:concatenatedEq}
\begin{aligned}
    \smash[b]{\norm{\mat{e_k\\s_k-1}}} &= \sqrt{\norm*{e_k}^2 + \abs*{s_k - 1}^2}\\
        &\leq \sqrt{[\beta(\norm{e_0}, 0) + \gamma(\abs{s_0 - 1})]^2 + \abs*{s_k - 1}^2}\\
        &\leq \sqrt{[\beta(\norm{e_0}, 0) + \gamma(\abs{s_0 - 1})]^2 + \abs*{s_0 - 1}^2}\\
        &\defas \nu(\norm{e_0}, \abs{s_0 - 1})\\
        &\leq \nu\qty(\norm{\mat{e_0\\s_0-1}}, \norm{\mat{e_0\\s_0-1}}).
\end{aligned}\end{equation}
By construction, $\nu$ is continuous, satisfies $\nu(0, 0) = 0$, and is increasing with respect to either of its arguments. We claim that $\varsigma \in \mathcal K$ satisfies \ref{eq:stabilityCondition} by choosing
\begin{equation}
    \varsigma\qty(\norm{\mat{e_0\\s_0-1}}) = \nu\qty(\norm{\mat{e_0\\s_0-1}}, \norm{\mat{e_0\\s_0-1}}),
\end{equation}
which proves that the equilibrium point $(0, 1)$ is Lyapunov stable.

Combining the above, we conclude that $(0, 1)$ is asymptotically stable in $(\mathcal D_x \times [0, 1]) \cap \tilde\Gamma$ under the closed-loop system \ref{eq:compactDynamics}, as it is both attractive and Lyapunov stable. This implies that $(\bar x_r, 1)$ is asymptotically stable in the same domain under the closed-loop dynamics \ref{eq:compactDynamics}.

\subsection{Proof of Lemma~\ref{prop:linProperties}} \label{sec:proofOflinProperties}

First, we verify that Assumption~\ref{as:linStageCost} satisfies Assumption~\ref{as:nonlinStageCost}. The proofs for the uniform continuity of $\ell$ and the condition $\ell(\bar x_r, \bar u_r, r) = 0$ are trivial. To show that $\ell$ is bounded below by a $\mathcal K_\infty$-function $\gamma$, we choose $\gamma$ as $\lambda_\subtext{min} \norm{x - \bar x_r}^2$, where $\lambda_\subtext{min}$ is the smallest eigenvalue of $Q$. Since $Q$ is real, positive definite and symmetric, $\lambda_\subtext{min}$ is guaranteed to be real.

Next, we prove that Assumption~\ref{as:decreasingLyapunov} satisfies Assumption~\ref{as:nolinTermCost}. The uniform continuity of $V$, and the fact that $V(\bar x_r, r) = 0$ and $V(x, r) \geq 0$ for all $(x, r) \in \termset$ are also trivial. The terminal control law $\kappa_\subtext{T}$ can be chosen as in \ref{eq:lintermlaw}. Thus, the inequality in \ref{eq:LMI} implies
\begin{equation}\begin{aligned}
        & {-}\delta x^\T(K^\T RK + Q)\,\delta x\\
    ={} & -(K\+\delta x)^\T* R\+(K\+\delta x) - \delta x^\T Q\,\delta x\\
    ={} & {-}\ell(x, \kappa_\subtext{T}(x, r), r)\\
    \geq{} & \delta x^\T(A - BK)^\T* P\+(A - BK)\,\delta x - \delta x^\T P\,\delta x\\
    ={} & \delta x^{\plus\T}\!P\,\delta x^\plus - \delta x^\T P\,\delta x\\
    ={} & V(x^\plus, r) - V(x, r),
\end{aligned}\end{equation}
where $\delta x \defas x - \bar x_r$, $x^\plus \defas Ax + B\kappa_\subtext{T}(x, r)$ and $\delta x^\plus \defas x^\plus - \bar x_r$, which proves the satisfaction of \ref{eq:termCostIneq} in Assumption~\ref{as:nolinTermCost}. 

Assumption~\ref{as:OCPSolutionContinuity} is satisfied by the results in \cite{ref:Bemporad2002}.

\subsection{Proof of Lemma~\ref{prop:linObsProperties}} \label{sec:proofOflinObsProperties}

For notational simplicity, we will let $\Pi_x = \Pi_{\mathcal O_{\mspace{-2mu}j}\!}(x)$.

We prove Lemma~\ref{prop:linObsSafe} by contradiction. Assume that the linearized obstacle constraints \ref{eq:halfspace} overlap with the obstacles, i.e., $H_{\mathcal O_{\mspace{-2mu}j}\!}(x) \cap \mathcal O_{\mspace{-2mu}j} \neq \emptyset$. The obstacle itself can be either open or closed; we assume the inflated obstacle $\mathcal{O}_j$ to be open. It follows that $\mathcal O_{\mspace{-2mu}j}$ and $\Int[H_{\mathcal O_{\mspace{-2mu}j}\!}(x)]$ must also intersect. Thus, there exists an $y \in \mathcal O_{\mspace{-2mu}j}$ such that $(x - \Pi_x)^\T\, (y - \Pi_x) > 0$. Since $\closure{\mathcal O_{\mspace{-2mu}j}}$, the closure of the obstacle, is convex, and both $y, \Pi_x \in \closure{\mathcal O_{\mspace{-2mu}j}}$, we have \noDisplaySkip*
\begin{equation} \label{eq:obstacleConvex}
    \Pi_x + t\,(y - \Pi_x) \in \closure{\mathcal O_{\mspace{-2mu}j}}, \quad \forall t \in [0, 1].
\end{equation}
Choose a $t \in (0, 1)$ such that
\begin{equation}
    t < \frac{2\,(x-\Pi_x)^\T\,(y-\Pi_x)}{\norm{y-\Pi_x}^2}.
\end{equation}
Then, \noDisplaySkip**
\begin{equation}\begin{aligned}
        & t^2\,\norm{y - \Pi_x}^2 < 2t\,(x - \Pi_x)^\T\, (y - \Pi_x)\\
    \implies& \begin{multlined}[t]
        \norm{x - \Pi_x}^2 - 2t\,(x - \Pi_x)^\T\, (y - \Pi_x)\\
        + t^2\,\norm{y - \Pi_x}^2 < \norm{x - \Pi_x}^2
    \end{multlined}\\
    \implies& \norm{(x - \Pi_x) - t\,(y - \Pi_x)}^2 < \norm{x - \Pi_x}^2\\
    \implies& \norm{x - [\Pi_x + t\,(y - \Pi_x)]} < \norm{x - \Pi_x}.
\end{aligned}\end{equation}
This means that the point $\Pi_x + t\,(y - \Pi_x)$ \dash which is in $\closure{\mathcal O_{\mspace{-2mu}j}}$ by \ref{eq:obstacleConvex} \dash is closer to $x$ than $\Pi_x$, contradicting the definition of $\Pi_x$ in \ref{eq:projection}.

To prove Lemma~\ref{prop:linObsInterior}, note that $x \in \Int[H_{\mathcal O_{\mspace{-2mu}j}\!}(x)]$ because
\begin{equation}
    (x - \Pi_x)^\T\, (x - \Pi_x) = \norm{x - \Pi_x}^2 > 0.
\end{equation}

\subsection{Proof of Lemma~\ref{prop:linTermSet}} \label{sec:proofOflinTermSet}

First, we show that the terminal set $\termset$ constructed in \ref{eq:termSetConstruct} satisfies Assumption~\ref{as:termInvariant}, i.e., $\termset$ is positively invariant and constraint admissible under the terminal dynamics \ref{eq:linTermDynamics}. We utilize a result from \cite[\S\,\emph{Computation of the Threshold Value}]{ref:Nicotra2018} which guarantees constraint satisfaction if the Lyapunov value $V(x, r)$ never exceeds the Lyapunov threshold value $\Lambda(r)$ as in \ref{eq:lyapunovThreshold}, i.e.,
\begin{equation}
    V(x, r) \leq \Lambda(r) \implies c(r)^\T x \leq d(r).
\end{equation}
Using this result, we see that $\Delta(x, r) \leq 0$ implies satisfaction of all the linearized constraints \ref{eq:linearConstraint}, which, in turn, ensures constraint admissibility by Lemma \ref{prop:linObsSafe}, i.e.,
\begin{equation}
    \Delta(x, r) \leq 0 \implies x \in \mathcal X \text{ and } \kappa_\subtext{T}(x, r) \in \mathcal U.
\end{equation}
Moreover, since $V$ is nonincreasing over time under the terminal dynamics by Assumption~\ref{as:decreasingLyapunov}, $\Delta$ is also nonincreasing, thereby showing that $\termset$ is invariant.

Next, we show that the constructed terminal set $\termset$ satisfies Assumption~\ref{as:equilInTermSet}. For any $r \in \mathcal R_\epsilon$, the equilibrium state $\bar{x}_r$ is strictly constraint-admissible by the definition of $\mathcal R_\epsilon$ in~\ref{eq:admissibleRef}, and thus strictly satisfies all linearized constraints by Lemma~\ref{prop:linObsInterior}. As a result, the Lyapunov threshold $\Lambda_i(r)$ in~\ref{eq:lyapunovThreshold} is strictly positive. Consequently, we have
\begin{equation}
    \Delta(\bar x_r, r) = \max_{i\in\mathbb N_{[1,n]}}\! \alpha_i\,[0 - \Lambda_i(r)] < 0.
\end{equation}
Since the obstacle is convex, the projection onto a convex set is Lipschitz \cite[Prop.~4.16]{ref:Bauschke2017}. Therefore, $\Delta$ is continuous because $\Lambda_i$ is continuous for each $i$ and the maximum of continuous functions is continuous. We can thus conclude that $\Delta$ is negative in some neighborhood of $(\bar x_r, r)$. Hence, $(\bar x_r, r) \in \Int\termset$ for all $r \in \mathcal R_\epsilon$.

\bibliographystyle{ieeetr}
\bibliography{IEEEabrv,references}

\end{document}